\documentclass[11pt]{article}
\emergencystretch=2em


\def\nottoobig#1{{\hbox{$\left#1\vcenter to1.111\ht\strutbox{}\right.\n@space$}}}

\newtheorem{fact}{Fact}

\newtheorem{theorem}{Theorem}[section]

\newtheorem{lemma}[theorem]{Lemma}
\newtheorem{claim}[theorem]{Claim}

\newtheorem{definition}[theorem]{Definition}

\usepackage{epsfig}
\usepackage{amsmath}
\usepackage{amsfonts}

\newcommand{\nat}{{\mathbb N}}

\newcommand{\poly}{{\rm poly}}

\def\nottoobig#1{{\hbox{$\left#1\vcenter
to1.111\ht\strutbox{}\right.\n@space$}}}


\newcommand{\prob}{{\rm Prob}}






\newcommand{\ie}{$\mbox{i.e.}$}




\newlength{\filength}
\settowidth{\filength}{\mbox{\bf f{}i}}
\newsavebox{\gcbox}
\sbox{\gcbox}{\framebox[\filength]{\rule{0ex}{2ex}}}




\newcommand{\qedblob}{\mbox{\rule[-1.5pt]{5pt}{10.5pt}}}
\def\literalqed{{\ \nolinebreak\hfill\mbox{\qedblob\quad}}}

\def\qed{\literalqed}












\newcommand{\singlespacing}{\let\CS=
\@currsize\renewcommand{\baselinestretch}{1}\tiny\CS}
\newcommand{\singlespacingplus}{\let\CS=
\@currsize\renewcommand{\baselinestretch}{1.25}\tiny\CS}
\newcommand{\doublespacing}{\let\CS=
\@currsize\renewcommand{\baselinestretch}{1.75}\tiny\CS}
\newcommand{\draftspacing}{\let\CS=
\@currsize\renewcommand{\baselinestretch}{2.0}\tiny\CS}


\def\zo{\{0,1\}}


\def\mapping{\rightarrow}











\usepackage{amsfonts}



%

\setlength{\oddsidemargin}{-0.2in}
\setlength{\evensidemargin}{\oddsidemargin}
\setlength{\textwidth}{6.9in}
\setlength{\textheight}{9.2in}
\setlength{\topmargin}{-1.0in}

\if01
\setlength{\textwidth}{14.0cm}

\setlength{\textheight}{9.50in}
\setlength{\oddsidemargin}{-0.5in}
\setlength{\evensidemargin}{-0.5in}

\fi

\if
\setlength{\topmargin}{0.5in}
\setlength{\textwidth}{15.0cm}
\setlength{\textheight}{19.3cm}
\setlength{\baselineskip}{42pt} 
\fi


\makeatletter
\def\@listI{\leftmargin\leftmargini \parsep 4.5pt plus 1pt minus 1pt\topsep6pt plus 2pt minus 2pt \itemsep  2pt plus 2pt minus 1pt}

\let\@listi\@listI
\@listi
\makeatother

\author{ {Marius Zimand\/}
\thanks{  Department of Computer and Information Sciences, Towson University,
Baltimore, MD.; email: mzimand@towson.edu; http://triton.towson.edu/\~{ }mzimand.
The author is supported in part
by NSF grant CCF 1016158.}}

\date{ }


\title{Symmetry of information and bounds on nonuniform randomness extraction via Kolmogorov extractors}

\begin{document}

\maketitle

\begin{abstract}
We prove a strong Symmetry of Information relation for random strings (in the sense of Kolmogorov complexity) and establish tight bounds on the amount on nonuniformity that is necessary for extracting a string with randomness rate $1$ from a single source of randomness. More precisely, as instantiations of more general results, we show:
\begin{itemize}
	\item For all $n$-bit random strings $x$ and $y$, $x$ is random conditioned by $y$ if and only if $y$ is random conditioned by $x$;
	\item While $O(1)$ amount of advice regarding the source is not enough for extracting a string with randomness rate $1$ from a source string with constant random rate, $\omega(1)$ amount of advice is.
\end{itemize}
The proofs use Kolmogorov extractors as the main technical device.

\end{abstract}

{\bf Keywords:} Symmetry of information, random strings, randomness extraction, Kolmogorov extractors.
\smallskip

\section{Introduction}
Kolmogorov extractors are procedures that increase the Kolmogorov complexity rate of strings and sequences. Their explicit study was initiated by Fortnow, Hitchcock, A.~Pavan, Vinodchandran and Wang~\cite{fhpvw:c:extractKol} for the case of finite strings and by Reimann~\cite{rei:t:thesis} for the case of infinite strings. The recent paper~\cite{zim:j:kolmextractsurvey} is a survey of this field.

In this paper, we use Kolmogorov extractors as a conceptual and technical device to derive results in two directions that otherwise appear unrelated.

First, we study the symmetry of information phenomenon for random strings. We show that a random string $x$ has essentially no information about a random string $y$ if and only if $y$ has essentially no amount of information about $x$. By ``essentially no amount of information," we mean that the amount of information is bounded by an absolute constant.  Up to this paper, it was only known that if $x$ has only a constant amount information about $y$, then $y$ has $O(\log n)$ information about $x$. Thus we replace the $O(\log n)$ term by $O(1)$.

Secondly, we investigate the amount of non-uniformity that is necessary for randomness extraction from one source. It is well-known that randomness extraction from a single source is not possible (if we exclude some trivial cases). In fact, as a consequence of a result of Vereshchagin and Vyugin~\cite{ver-vyu:j:kolm}, we note that obtaining a source with randomness rate $1$ from a source with randomness rate, say, $0.99$ is not possible even if the extractor has access to a constant amount of non-uniform information. In contrast, we show that an $\omega(1)$ amount of non-uniform information is sufficient for this task.

We continue with a more detailed discussion of the two types of results and of the technical method that we use.
\smallskip

{\bf Symmetry of Information.} Strictly speaking, a string $x$ of length $n$ is said to be random if its Kolmogorov complexity is at least $n$ (\ie, $C(x \mid n) \geq n$) and is said to be random conditioned by string $y$ if its Kolmogorov complexity conditioned by $y$ is at least $n$ (\ie, $C(x \mid y) \geq n$). Since the Kolmogorov complexity depends up to an additive constant on the choice of the universal machine, a relaxed definition is more robust. Such a definition is obtained if we require that $C(x \mid n)$ (respectively, $C(x \mid y)$) are within a constant from $n$. Formally, for a constant $c$, we say that $x$ is $c$-random if $C(x \mid n) \geq n-c$, and $x$ is $c$-random conditioned by $y$ if $C(x \mid y) \geq n-c$.

We prove a strong symmetry-of-information relation for random strings: for any $n$-bit random strings $x$ and $y$, $x$ is random conditioned by $y$ iff $y$ is random conditioned by $x$.
More exactly, we show the following.
\smallskip

\begin{fact}
\label{t:symrandom}
(Informal statement; see Theorem~\ref{t:symrandstrings} for full statement.)
Let $x$ and $y$ be two $n$-bit strings that are $c$-random and such that $x$ is $c$-random conditioned by $y$. Then $y$ is $c'$-random conditioned by $x$, where $c'$ is a constant that only depends on $c$.  
\end{fact}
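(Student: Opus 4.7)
My strategy is to use a strong two-source Kolmogorov extractor as the technical driver, in the same spirit as the known $O(\log n)$-version of symmetry of information but trading the logarithmic slack for a constant. The key lemma I would invoke (to be established earlier as the engine of the paper) is the existence of a polynomial-time computable function $E:\{0,1\}^n\times\{0,1\}^n\to\{0,1\}^m$ whose output length $m$ lies within an additive constant $c_1=c_1(c)$ of $n$ and whose output is random even conditional on one of its inputs: for every pair $(u,v)$ with $C(u\mid n),C(v\mid n)\geq n-c$ and $C(u\mid v)\geq n-c$,
\[
C(E(u,v)\mid u)\;\geq\;m-c_0,
\]
where $c_0$ depends only on $c$. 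This is the Kolmogorov analogue of a \emph{strong} two-source randomness extractor, and the ``strong'' part is precisely what makes the conclusion usefully asymmetric in $u$ and $v$.

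With the lemma in hand, the theorem follows in a few lines. Instantiate the lemma with $u:=x$ and $v:=y$; the hypotheses of the theorem match the premise verbatim, so $C(E(x,y)\mid x)\geq m-c_0$. Since $E$ is computable, a shortest program that produces $y$ from $x$ also produces $E(x,y)$ from $x$ with only $O(1)$ extra bits, giving $C(E(x,y)\mid x)\leq C(y\mid x)+O(1)$. Combining the two inequalities and substituting $m=n-c_1$ yields $C(y\mid x)\geq n-c'$, where $c':=c_0+c_1+O(1)$ depends only on $c$. This is exactly the claimed $c'$-randomness of $y$ conditioned by $x$.

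The main obstacle is proving the strong Kolmogorov extractor lemma with a \emph{constant} rather than logarithmic defect $c_0$. The standard passage from a classical strong two-source randomness extractor to a Kolmogorov guarantee loses an $O(\log n)$ term because the program witnessing the upper bound on $C(E(u,v)\mid u)$ in the bad case must specify integer parameters such as $C(u)$, $C(v)$, and an error index. To squeeze this slack to $O(1)$, I would exploit that $C(u)$ and $C(v)$ are already pinned to within $c$ of $n$: there are only finitely many admissible parameter settings, so enumerating them costs no bits beyond the constant $c_0$ we allow ourselves, and the ``bad set'' of the extractor can be covered by constantly many enumerable sets of bounded size. Carrying this bookkeeping through while preserving the strong (asymmetric) output guarantee is where the real technical work lies.
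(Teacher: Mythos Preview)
Your reduction step is clean, but the strong-extractor lemma you propose has a direction mismatch that the standard argument cannot bridge. A strong Kolmogorov extractor proof works as follows: assuming $C(E(u,v)\mid u)<m-c_0$, one shows that $v$ lies in a set of size $2^{n-c_0+O(1)}$ enumerable \emph{given $u$}, and concludes $C(v\mid u)\le n-c_0+O(1)$. This contradicts a hypothesis of the form $C(v\mid u)\ge n-c$, but your hypothesis is $C(u\mid v)\ge n-c$---the conditioning runs the other way. Your paragraph about ``pinning $C(u),C(v)$ to within $c$ of $n$'' correctly handles the parameter-specification overhead, but it does not touch this asymmetry. Indeed, your lemma is \emph{equivalent} to the theorem: the theorem follows from the lemma exactly as you show, and conversely the theorem gives $C(v\mid u)\ge n-c'$, after which the standard strong extractor delivers your lemma. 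So you have restated the problem rather than reduced it.

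The paper sidesteps this by not using a strong extractor at all. It builds a (non-strong) Kolmogorov extractor $E:\{0,1\}^n\times\{0,1\}^n\to\{0,1\}^m$ with $m\approx 2n$ whose output satisfies $C(E(x,y)\mid n)\ge C(x\mid n)+C(y\mid x)-O(1)$ in the random-string regime (this is the new chain rule, Theorem~\ref{t:lowerboundsym}). Here the hypothesis $C(y\mid x)\ge n-c$ (equivalently $I(x:y)=O(1)$) is used in the natural direction, yielding $C(xy\mid n)\ge 2n-O(1)$. The flip to the other conditioning is then achieved by the \emph{easy} chain-rule upper bound $C(xy\mid n)\le C(y\mid n)+C(x\mid y)+O(1)$, which needs no extractor at all. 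The asymmetry you are fighting is thus absorbed by going through the joint complexity $C(xy\mid n)$ rather than trying to lower-bound $C(y\mid x)$ directly.
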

\smallskip

This result is a consequence of a new form of the Symmetry of Information Theorem, which is tighter than the classical theorem of Kolmogorov and Levin~\cite{zvo-lev:j:kol} for some types of strings, including the class of random strings. This is an important class because most strings are random and because counting arguments based on Kolmogorov complexity typically use random strings.
Symmetry of information is one of the basic principles in information theory. It states that for any two random variables $X$ and $Y$, the information in $X$ about $Y$ is \emph{equal} to the information in $Y$ about $X$, \ie,  $I(X : Y) = I(Y : X)$, where $I(X : Y) = H(Y) - 
H(Y \mid~X)$ and $H( )$ is the Shannon entropy.
The principle also holds in algorithmical information theory, provided we replace \emph{equal} by \emph{approximately equal}. More formally, for two binary strings $x$ and $y$, we define $I(x : y) = C(y) - C(y \mid x)$. Then, the Symmetry of Information Theorem of Kolmogorov and Levin states that
 $| I(x:y) - I(y:x) | = O(\log |x| + \log |y|)$. For the general case of arbitrary strings, the logarithmical term cannot be avoided because there are $n$-bit strings $x$ and $y$ such that 
$| I(x:y) - I(y:x) | > \log n - O(1)$ (for example, such strings can be obtained by a simple modification of Example 2.27 in~\cite{li-vit:b:kolmbook-first-ed}).

Symmetry of information follows immediately from another basic result, the \emph{Chain Rule}: $C(xy) \approx C(y) + C(x \mid y)$. While it is easy to see that $C(xy) \leq C(y) + C(x \mid~y) + C^{(2)} (y)$
\footnote{We use the following shortcut notations: $C^{(2)}(x)$ is $C(C(x))$, $C^{(2)}(x \mid~y)$ is $C(C(x \mid~y))$, $C^{(2)}(x \mid~n)$ is $C(C(x \mid~n) \mid n)$.},
the more difficult direction in the proof of the chain rule shows that $$C(xy) \geq C(y) + C(x \mid y) - C^{(2)} (xy).$$
We prove a new form of the Chain Rule (which implies the alluded new form of the Symmetry of Information Theorem; here and in the rest of this paper, $I(x : y) = C(y \mid n) - 
C(y \mid~x)$.)
\smallskip

\begin{fact} 
\label{t:chain}
(Informal statement;  see Theorem~\ref{t:lowerboundsym} for full statement)
Let $x$ and $y$ be $n$-bit strings such that $C(x \mid~y) = \Omega (\log n)$ and $C(y \mid~x) = \Omega (\log n)$. Then
\[
\begin{array}{ll}
 C(xy \mid n)  & \geq C(x \mid n) + C(y \mid x) - \log I(x : y)  \\
               & \quad \quad - O(C^{(2)}(x \mid n) + C^{(2)}(y \mid n) + C^{(2)}(y \mid x) ). 
\end{array}
\]
\end{fact}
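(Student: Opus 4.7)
The plan is to refine the classical Kolmogorov--Levin counting argument underlying the Chain Rule so that the logarithmic loss is governed by $\log I(x:y)$ and second-order complexity terms rather than $O(\log n)$, and to use a Kolmogorov extractor to justify this tighter precision in the enumeration.

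First, set $k = C(xy \mid n)$ and consider the enumerable set $S = \{(u,v) \in \{0,1\}^n \times \{0,1\}^n : C(uv \mid n) \leq k\}$, which has size at most $2^{k+O(1)}$. For each $u$, let $T_u = \{v : (u,v) \in S\}$, and partition the indices by size class, $U_i = \{u : 2^i \leq |T_u| < 2^{i+1}\}$, so that $|U_i| \leq 2^{k-i+O(1)}$. If we know the pair $(k,i)$, we can describe $y$ by its index in $U_i$ (cost $\approx k-i$) and $x$ by its index in $T_y$ (cost $\approx i$); the $O(\log n)$ loss in the classical proof is the cost of prefixing this description with $(k,i)$. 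The goal is to show that in our setting this prefix can be shortened to $\log I(x:y)$ plus the $C^{(2)}$ terms in the statement.

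Second, since the hypothesis $C(x \mid y), C(y \mid x) = \Omega(\log n)$ is exactly of the sort that feeds into a Kolmogorov extractor, apply such an extractor $E$ to $(x,y)$ in the style of Fortnow, Hitchcock, Pavan, Vinodchandran and Wang. The parameters $k$ and $i$ are essentially determined by the triple $(C(x\mid n), C(y \mid n), C(y \mid x))$: one has $k \approx C(x\mid n)+C(y\mid x)$ and the relevant size class $i$ lies near $C(y\mid x)$, with $C(y \mid n)$ needed to locate the compatibility range. Writing down these three complexity values costs $O(C^{(2)}(x\mid n) + C^{(2)}(y\mid n) + C^{(2)}(y\mid x))$ bits, matching the error term in the statement. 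What remains is the residual uncertainty in $(k,i)$ after these three values are fixed, and the Kolmogorov-extractor argument bounds the logarithm of the number of admissible residuals by $\log I(x:y)$: if there were substantially more compatible residuals, one could construct pairs $(x',y')$ with the same complexity profile but inconsistent extractor behaviour, contradicting the extractor guarantee for inputs of conditional complexity $\Omega(\log n)$.

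The main obstacle I expect is converting the extractor guarantee into the quantitative bound $\log I(x:y)$ on the residual description cost. This will require a careful counting argument in the space of candidate parameters $(k',i')$ consistent with the observed triple of complexities, showing that only $2^{O(\log I(x:y))}$ of them can coexist while preserving the extractor's output on the corresponding witnesses. Once this compression of $(k,i)$ is established, substituting back into the standard counting step and combining the descriptions of $y$ and of $x$ given $y$ yields the claimed inequality, with $I(x:y) = C(y\mid n) - C(y\mid x)$ appearing exactly because it measures the "slack" between the nominal and true values of $k-i$.
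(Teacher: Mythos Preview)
Your proposal has a genuine gap at exactly the point you flag as ``the main obstacle,'' and the sketch you give there does not work.

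You want to compress the auxiliary pair $(k,i)$ in the Kolmogorov--Levin counting using the triple $(C(x\mid n), C(y\mid n), C(y\mid x))$ plus a residual of $\log I(x:y)$ bits, and you appeal to an extractor to bound that residual. But the extractor guarantee says only that $C(E(x',y'))$ is large whenever $(x',y')$ has the right complexity profile; it says nothing about how many parameter pairs $(k',i')$ are ``compatible'' with a given complexity triple, and there is no mechanism by which many such residuals would force ``inconsistent extractor behaviour.'' Concretely, even knowing $C(x\mid n)$, $C(y\mid n)$, $C(y\mid x)$ exactly, you do not know $k=C(xy\mid n)$ to within $I(x:y)$: the easy direction of the chain rule gives only $k\le C(x\mid n)+C(y\mid x)+O(C^{(2)}(x\mid n))$, while a matching lower bound on $k$ is precisely what you are trying to prove, so using it to locate $k$ is circular. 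The link between $I(x:y)$ and the slack in $(k,i)$ is asserted rather than derived.

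The paper takes a completely different route. It does not refine the Kolmogorov--Levin counting at all. Instead it builds, by the probabilistic method, a balanced table $E:\{0,1\}^n\times\{0,1\}^n\to\{0,1\}^m$ whose parameters $k_x,k_y,m,d$ are tuned to the values $t_x=C(x\mid n)$, $t_y=C(y\mid n)$, $t_{y,x}=C(y\mid x)$; here $d=2(C(t_x\mid n)+C(t_y\mid n)+C(t_{y,x}\mid n))+I(x:y)+O(1)$ and $m=k_x+k_y-\log d-O(1)$. A direct argument (enumerate bad rows, then bad cells in the good row $x$) shows $C(E(x,y)\mid n,m)\ge m-d$. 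Since $E$ is determined by $n$ together with a self-delimiting encoding $\Lambda$ of the six numbers $t_x,t_y,t_{y,x}$ and their complexities, of total length $\lambda<d-I(x:y)$, one gets $C(xy\mid n)\ge C(E(x,y)\mid n,m)-\lambda-O(1)\ge m-d-\lambda-O(1)$, and unwinding the parameter choices yields the stated bound. The $\log I(x:y)$ term appears because $m$ contains $-\log d$ and $d$ contains $I(x:y)$; the $C^{(2)}$ terms appear because $\lambda$ and $d$ contain them. So the extractor is not used to compress the overhead in a counting argument; it \emph{is} the argument, and the inequality $C(xy\mid n)\ge C(E(x,y)\mid n,m)-\lambda$ replaces the classical enumeration entirely.
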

\smallskip

Note that if $x$ is random and $y$ is random even conditioned by $x$, then $C^{(2)}(x \mid n) = O(1)$, $C^{(2)}(y \mid n) = O(1)$, $C^{(2)}(y \mid x) = O(1)$, and $I(x : y) = O(1)$. In this case we obtain
$ C(xy \mid n) \geq C(x \mid n) + C(y \mid x) - O(1) = 2n - O(1)$. Fact~\ref{t:symrandom} follows easily from this relation.
\medskip

{\bf Randomness extraction with small advice.} By and large, randomness extraction is an algorithmical process that constructs a source of randomness of high quality from one or several sources of lower quality. If we restrict to the case when there is only one input source, one wants to design an effective transformation $E$ from the set of $n$-bit strings to the set of $m$-bit strings such that for any source $x$ ``with randomness $\leq k$", $E(x)$ has randomness $\approx m$. It is desirable to have $m \approx k$ (\ie, to extract all, or almost all, of the randomness in the source). The problem of randomness extraction has been modeled in two ways. In the first model, a source is a probability distribution $X$ over $\zo^n$ and its randomness is given by the min-entropy $H_\infty(X)$. In the second model, a source is a string $x \in \zo^n$ and its randomness is given by its Kolmogorov complexity $C(x)$. For our study, the second model is more convenient; moreover all the results can be translated in the first model. (For the relation between the two models, see~\cite{fhpvw:c:extractKol}, ~\cite{hit-pav-vin:c:Kolmextraction} and also the survey paper~\cite{zim:j:kolmextractsurvey}).

Formally (in the second model), given the parameter $k \leq n$, one would like to have a function $E : \zo^n \mapping \zo^m$ such that $C(E(x)) \approx m$ whenever $C(x) \geq k$. It is well known that no such computable function $E$ exists for non-trivial parameters. Indeed for any given $E$, consider the string $y \in \zo^m$ with the largest number of preimages. Then $C(y \mid n) = O(1)$ and among its at least $2^{n-m}$ preimages there must be some $x$ with $C(x) \geq n-m$. In other words, for any given $E$, there are some strings (such as the above $x$) on which $E$ fails. Thus, in order for a function $E$ to extract randomness from any source $x$ with randomness $\geq k$, $E$ must have some additional information $\alpha_x$, which we call \emph{advice about the source.} The question is how much such advice information should be provided.

Fortnow et al.~\cite{fhpvw:c:extractKol} have shown that a constant number of advice bits are sufficient if one settles to extracting from strings with linear randomness a string whose randomness rate is $1-\epsilon$.~\footnote{The randomness rate of an $n$-bit string $x$ is $C(x)/n$.} More precisely, they show that for any positive rational numbers $\sigma$ and $\epsilon$, there exists a polynomial-time computable function $E$ and a constant $h$ such that for any $x \in \zo^n$ with $C(x) \geq \sigma n$, there exists a string $\alpha_x$ of length $h$ such that $C(E(x, \alpha_x)) \geq (1-\epsilon)m$ and $m \geq cn$, for some constant $c$ that depends on $\sigma$ and $\epsilon$. Note that this result implies that it is possible to construct in polynomial time a list with $2^h$ strings and one of them is guaranteed to have Kolmogorov complexity at least $(1-\epsilon)m$. 

The shortcoming of Fortnow et al's result is that the randomness rate of the output is not $1$. It would be desirable that $C(E(x, \alpha_x)) \geq m - o(m)$. We first remark that as a  consequence of a result of Vereshchagin and Vyugin~\cite{ver-vyu:j:kolm}, randomness rate $1$ cannot be obtained with a constant number of bits of advice about the input. Indeed, we show that if a computable function $E: \zo^n \times \zo^h \mapping \zo^m$ has the property that for all strings $x$ with $C(x) \geq \sigma n$, it holds that there exists $\alpha_x$ such that $C(E(x, \alpha_x)) \geq (1-\epsilon)m$, then $\epsilon \geq \frac{1-\sigma}{2^{h+1}-1} - o(1)$ (provided that $m = \omega(\log n +h)$).

In contrast with the above impossibility result, we show that from sources with a linear amount of randomness, one can extract a string with randomness rate $1$ with basically any non-constant amount of advice, such as, for example, the inverse of the Ackerman function. This is an instantiation of the following more general result.
\smallskip

\begin{fact}
\label{t:boundsinformal}
(Informal statement;  see Theorem~\ref{t:extractsmalladvice} for full statement.)
For any $m$ computable from $n$ there exists a computable function $E$ with the following property:

For every $n$-bit string $x$ with complexity $\geq m$, there exists a string $\alpha_x$ of length $\omega( \log \frac{n}{m})$ such that $C(E(x,\alpha_x)) = m - o(m)$ and the length of $C(E(x, \alpha_x))$ is $m$.
\end{fact}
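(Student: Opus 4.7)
The plan is to build $E$ as a seeded disperser whose $h$-bit seed plays the role of the advice $\alpha_x$, with $h$ and the allowed complexity deficiency $f$ chosen so that a standard reconstruction argument forces at least one seed to give an output of complexity at least $m-f$ whenever $C(x) \geq m$. Informally, if all the candidates $E(x,\alpha)$ had complexity below $m-f$, they would fit into the small set $S$ of low-complexity $m$-bit strings, forcing $x$ into a small $S$-dependent exceptional set and yielding a description of $x$ strictly shorter than $m$ bits.

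First I would fix a function $\ell(n) = \omega(1)$ growing as slowly as desired, set $h := \lceil \log(n/m) \rceil + \ell(n)$ (so $h = \omega(\log(n/m))$), and choose $f = f(n,m)$ satisfying $f = o(m)$, $f \geq c\log n$ (for a large constant $c$), and $f \cdot 2^h \geq n - m + (c+1)\log n$; one concrete choice when $\log n = o(m)$ is $f := \max(\lceil c \log n \rceil,\, \lceil m/2^{\ell(n)/2} \rceil)$, while the degenerate regime $m = O(\log n)$ can be handled separately since $h = \omega(\log n)$ then already gives a super-polynomial advice length. I would then define $E : \zo^n \times \zo^h \to \zo^m$ as the lexicographically first function $F$ with the following disperser property: for every $S \subseteq \zo^m$ with $|S| \leq 2^{m-f}$, the exceptional set $T_S(F) := \{x \in \zo^n : F(x,\alpha) \in S \text{ for all } \alpha \in \zo^h\}$ has $|T_S(F)| \leq 2^{m - c\log n}$. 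Existence of such an $F$ follows from a standard first-moment plus Chernoff argument: $\mathbb{E}|T_S(F)| \leq 2^{n - f \cdot 2^h} \leq 2^{m - (c+1)\log n}$, Chernoff makes $\Pr[|T_S(F)| > 2^{m - c\log n}]$ doubly exponentially small in $m - c\log n$, and this survives a union bound over the at most $2^{O(f \cdot 2^{m-f})}$ relevant sets $S$ precisely because $f \geq c \log n$. Since the search is finite, $E$ is computable uniformly in $n$.

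For correctness, suppose $x \in \zo^n$ has $C(x \mid n) \geq m$ but $C(E(x,\alpha) \mid n) < m - f$ for every $\alpha \in \zo^h$. Set $S := \{y \in \zo^m : C(y \mid n) < m - f\}$; then $|S| < 2^{m-f}$ and $x \in T_S(E)$. The key observation is that $T_S(E)$ is recursively enumerable uniformly in $n, m, f$: the set $S$ is r.e. via its decidable time-bounded approximations $S_t := \{y \in \zo^m : C^t(y \mid n) < m - f\}$, and $T_S(E) = \bigcup_t T_{S_t}(E)$ is an r.e. union of decidable sets. Consequently $x$ admits a description of length $\log |T_S(E)| + O(\log n) \leq m - c\log n + O(\log n)$, which is strictly less than $m$ for $c$ chosen large enough, contradicting $C(x \mid n) \geq m$. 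Hence some $\alpha_x$ achieves $C(E(x,\alpha_x) \mid n) \geq m - f = m - o(m)$, with $|\alpha_x| = h = \omega(\log(n/m))$ and $|E(x,\alpha_x)| = m$, as required.

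The hard part will be the parameter balance: the probabilistic existence argument demands $f \cdot 2^h \geq n - m$ up to lower-order terms, the reconstruction step demands $f \geq c\log n$ so that the union bound over subsets $S$ can be absorbed, and both must coexist with the external requirements $h = \omega(\log(n/m))$ and $f = o(m)$. Writing $h = \log(n/m) + \ell$ exposes a multiplicative slack of $2^\ell$ in $f \cdot 2^h$, which is exactly what permits $f$ to shrink to $o(m)$ while $\ell$ is only $\omega(1)$; tracking this tradeoff cleanly across all regimes of $m$ relative to $n$, and separately handling the small-$m$ case, is where the technical work concentrates.
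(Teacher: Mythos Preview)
Your argument is correct and gives a genuinely different proof from the paper's. The paper does not use a disperser; instead it introduces a \emph{balanced table}: a function $E:[N]\times[N_1]\to[M]$ such that in every large-enough set of rows $B$ and every large-enough color set $A$, the density of $A$-cells in $B\times[N_1]$ is at most $\Delta\cdot|A|/M$. Existence is shown by the probabilistic method (Lemma~\ref{l:probtable}), and then a fixed $E$ is obtained by brute-force search, exactly as you do. The extraction lemma (Lemma~\ref{l:tableext}) then shows that for \emph{any} seed $\alpha$ with $C(\alpha\mid x)\geq h$ the output $E(x,\alpha)$ has complexity $\geq m-d$, where $d=\delta+O(1)$ and $\delta=n/2^{0.5h}$; picking $h=\omega(\log(n/m))$ makes $d=o(m)$.

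So the contrast is: you prove that \emph{some} seed works (disperser), while the paper proves that \emph{every} seed random relative to $x$ works (extractor/sampler). Your route is arguably the shortest path to Fact~\ref{t:boundsinformal} alone. The paper's route buys two things: it unifies this result with the Symmetry of Information argument (Theorem~\ref{t:lowerboundsym}), which uses the same balanced-table machinery, and it is tailored for the derandomization in Theorem~\ref{t:derandextract}, since the balanced-table property is a ``for all rectangles, a counting bound holds'' statement that can be approximately checked by constant-depth circuits and hence derandomized via Nisan--Wigderson. Your disperser property is also of this shape, so in principle your approach could be derandomized similarly, but the paper does not go that way.

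One small remark on your union bound: with $\mu\leq 2^{m-(c+1)\log n}$ and threshold $t=2^{m-c\log n}$, the ratio $t/\mu$ is only $n$, and the resulting Chernoff tail $\big(e\mu/t\big)^t\approx 2^{-(\log n)\cdot 2^{m-c\log n}}$ does not quite absorb $2^{(f+O(1))2^{m-f}}$ at $f=c\log n$; you need a slightly larger gap, e.g.\ $f\geq (c+1)\log n$ or $f\cdot 2^h\geq n-m+(c+2)\log n$. This is a one-line parameter tweak and does not affect the structure of your argument.
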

\smallskip

Note that the function $E$ from Fact~\ref{t:boundsinformal} is computable, but no complexity bound is claimed for it.  We can obtain an extractor $E$ computable by a polynomial-size circuit with almost all the properties from 
Theorem~\ref{t:boundsinformal}. Basically the weakening is that the output length $m$ has to be $\leq cn$, for some positive constant $c$. Moreover the polynomial-size circuit is itself computable, in the sense that there exists an algorithm that on input $n$ outputs the description of the circuit that computes $E$.  We call such a circuit an \emph{effectively constructible  circuit}.
\begin{fact}
\label{t:boundspolysizeinformal}
(Informal statement;  see Theorem~\ref{t:derandextract} for full statement.)
There exists a constant $c$ such that for any $m$ computable in polynomial time from $n$ and $m \leq cn$ there exists a function $E$, which is computable by a polynomial-size effectively constructible circuit with the following property:

For every $n$-bit string $x$ with complexity $\geq m$, there exists a string $\alpha_x$ of length $\omega( \log \frac{n}{m})$ such that $C(E(x,\alpha_x)) = m - o(m)$ and the length of $E(x, \alpha_x)$ is $m$.
\end{fact}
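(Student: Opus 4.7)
The plan is to derandomize the construction of Theorem~\ref{t:extractsmalladvice} (Fact~\ref{t:boundsinformal}) by instantiating its abstract computable extractor with an explicit polynomial-time Kolmogorov extractor. The key input is the explicit Kolmogorov extractor of Fortnow et al.~\cite{fhpvw:c:extractKol}: for every pair of positive rationals $\sigma$ and $\epsilon$, there is a polynomial-time computable map $E_{\sigma,\epsilon} : \zo^n \times \zo^{h(\sigma,\epsilon)} \mapping \zo^{m_0}$ with $m_0 = \Omega(n)$ and $h(\sigma,\epsilon) = O_{\sigma,\epsilon}(1)$, such that every $x$ with $C(x) \geq \sigma n$ admits an advice string of length $h(\sigma,\epsilon)$ driving the output to complexity $\geq (1-\epsilon) m_0$. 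The restriction $m \leq cn$ in the target theorem comes directly from this construction: the output length is at most a fixed constant fraction of $n$.

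To push the output rate from $1-\epsilon$ (constant) to $1-o(1)$, I would follow the same boosting blueprint as in the proof of Theorem~\ref{t:extractsmalladvice}. Pick a slowly shrinking sequence $\epsilon_i \searrow 0$ of rationals and a slowly growing index function $i(n) \to \infty$, and define $E(x,\alpha)$ to parse $\alpha = (i,\beta)$, where the first $\lceil \log i(n)\rceil$ bits of $\alpha$ encode $i$, and then output $T_m(E_{\sigma,\epsilon_i}(x,\beta))$, with $T_m$ truncating to exactly $m$ bits. For any $x$ of length $n$ with $C(x) \geq m$, choosing $i=i(n)$ and the correct Fortnow et al.\ stage-advice $\beta$ yields $C(E(x,\alpha_x)) \geq (1-\epsilon_{i(n)})m - O(\log n) = m - o(m)$; the advice length $\lceil \log i(n)\rceil + h(\sigma,\epsilon_{i(n)})$ can be tuned to any prescribed rate $\omega(\log(n/m))$, padding with ignored bits when the target is larger (e.g.\ when $m$ is far below $n$). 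Because the Fortnow et al.\ construction is uniform in $n$, $\sigma$, and $\epsilon$, and $m$ is polynomial-time computable from $n$, the resulting family of circuits for $E$ is polynomial-size and effectively constructible.

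The main obstacle is simultaneously meeting three constraints on $i(n)$: $\epsilon_{i(n)} \to 0$, so the output deficit $\epsilon_{i(n)} m$ is $o(m)$; $\lceil \log i(n)\rceil + h(\sigma,\epsilon_{i(n)}) = \omega(\log(n/m))$, so the advice length matches the target; and $h(\sigma,\epsilon_{i(n)}) = O(\log n)$, so the circuit remains polynomial-size. Since $\omega(\log(n/m))$ is open-ended, there is ample slack: one can take $i(n)$ to grow just fast enough that $\epsilon_{i(n)} < 1/\log\log n$, say, with $h(\sigma,\epsilon_{i(n)})$ much smaller than $\log n$. A secondary point is the additive $O(\log n)$ loss from truncation, which is absorbed into $o(m)$ whenever $m = \omega(\log n)$; when $m$ is even smaller, one switches to an explicit full-range Kolmogorov extractor (as in~\cite{hit-pav-vin:c:Kolmextraction}), for which the $O(\log n)$ bits of a seed can be used directly as the advice, since in that regime $\omega(\log(n/m))$ is itself $\omega(\log n)$.
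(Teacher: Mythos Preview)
Your plan is not the paper's, and as written it has a genuine gap.

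\textbf{The paper's route.} Theorem~\ref{t:extractsmalladvice} is not proved by any ``boosting blueprint''; it builds a single $(K,D,\Delta)$-balanced table directly by the probabilistic method and invokes Lemma~\ref{l:tableext}. Theorem~\ref{t:derandextract} derandomizes \emph{that} construction: the predicate ``$E$ is $(K,D,\Delta)$-balanced'' amounts to approximate counting of $A$-cells over all relevant rectangles, and hence (Ajtai, Theorem~\ref{t:ajtai}) is checkable by a constant-depth circuit of size $\poly(N^K)$. Since a random table is balanced with probability $>1/2$, the Nisan--Wigderson generator against $\mathrm{AC}^0$ (Theorem~\ref{t:NWgen}) produces a balanced table on some seed $s$ of length $\poly(n)$; one finds $s$ by exhaustive search (this is the ``effectively constructible'' part), hardwires it, and uses the per-bit efficiency of NW-gen to get a $\poly(n)$-size circuit for $E$. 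The constraint $m\leq cn$ is precisely what makes the checking circuit fit within the size bound $2^{\tilde N^{\alpha_{NW}}}$ that NW-gen can fool.

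\textbf{Where your argument breaks.} The Fortnow et al.\ extractor requires $C(x)\geq \sigma n$ for a \emph{fixed} rational $\sigma>0$, but the theorem must handle every $x$ with $C(x)\geq m$, and $m(n)$ may be any polynomial-time computable function with $m(n)\leq cn$ --- in particular $m=o(n)$. Take $m(n)=n/\log n$: no fixed $\sigma$ applies, yet your seeded-extractor fallback is also unavailable, because $\log(n/m)=\log\log n$ and the prescribed advice length $h(n)$ may be only, say, $(\log\log n)^2\ll\log n$. Your two cases (``$m=\omega(\log n)$: use Fortnow et al.''\ versus ``$m=O(\log n)$: use an $O(\log n)$-seeded extractor'') leave the whole range $\omega(\log n)\leq m\leq o(n)$ uncovered. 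A secondary issue is that even for $m=\Theta(n)$, where Fortnow et al.\ does apply, your argument rests on quantitative features of that construction that are not stated here --- uniformity in $\epsilon$, the growth of $h(\sigma,\epsilon)$ as $\epsilon\to 0$, and the dependence of the output length $m_0$ on $\epsilon$ --- and would have to be verified directly from~\cite{fhpvw:c:extractKol}.
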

\medskip

{\bf Discussion of technical aspects.}
We present the main ideas in the proofs of Fact~\ref{t:chain}, Fact~\ref{t:boundsinformal}, and Fact~\ref{t:boundspolysizeinformal}. As mentioned, the proofs rely on Kolmogorov extractors. A Kolmogorov extractor is a computable ensemble of functions $E: \zo^{n_1} \times \zo^{n_2} \mapping \zo^m$ such that for all $x \in \zo^{n_1}$ and $y \in \zo^{n_2}$ that have Kolmogorov complexity above a certain threshold value and that are sufficiently independent (which roughly means that $C(y \mid x) \approx C(y)$), it holds that $C(E(x,y)) \approx m$.

The general idea for the Chain Rule in Fact~\ref{t:chain} is to construct a Kolmogorov extractor $E: \zo^n \times \zo^n \mapping \zo^m$ that extracts almost all the randomness from its inputs, \ie, $C(E(x,y) \mid n) \geq C(x \mid n) + C(y \mid x) - ({\rm{small~ term}})$. If $E$ would be a computable function, then $C(E(x,y) \mid n) \leq C(xy \mid n) + O(1)$ (*), and we would obtain the difficult direction in the Chain Rule, $C(xy \mid~n) \geq C(x \mid~n) + C(y \mid~x)$, modulo the small additive term. However, in order to have that term small enough for the desired level of tightness in the Chain Rule, the randomness extraction needs to be finely tuned (this is the most technical part of the proof) depending on some attributes of $x$ and $y$.  Thus, $E$ will not be fully computable, as required by the standard definition of a Kolmogorov extractor. Instead,  $E$ needs a few bits of information about its inputs, and even if this weakens the inequality (*), it still allows us to achieve the desired level of tightness in the Chain Rule.

To obtain the extractors $E$ that require a small amount of non-uniform information about the source (see Fact~\ref{t:boundsinformal} and Fact~\ref{t:boundspolysizeinformal}), the use of Kolmogorov extractors is quite direct. The goal here is to show that for each $x$, it is enough to have a short string $\alpha_x$ such that $E(x, \alpha_x)$ has randomness rate $1$ and contains almost all the randomness of $x$. The solution is based on the fact that from $x$ and a short string that is random even conditioned by $x$, one can extract almost all the randomness of $x$. This is similar to the well-studied case of seeded extractors, with the remark that we can have shorter seeds because requiring that the output has randomness rate equal to $1$ is weaker than requiring that the output is statistically close to the uniform distribution (as stipulated in the definition of seeded extractors). Then we take $\alpha_x$ to be such a short seed. The above fact is obtained via an elementary use of the probabilistic method. We first identify a combinatorial object, called a \emph{balanced table}, that characterizes a Kolmogorov extractor, in the sense that the table of a Kolmogorov extractor must satisfy the combinatorial constraints of a balanced table. We show (with the probabilistic method) that such an object exists  with a seed of length $\omega(\log (n/m))$, where $n$ is the length of $x$ and $m$ is the Kolmogorov complexity of $x$. This establishes Fact~\ref{t:boundsinformal}. Since the function $E$ from Fact~\ref{t:boundsinformal} is obtained via the probabilistic method, we cannot claim any complexity bound for it.  To obtain the Kolmogorov extractor in Fact~\ref{t:boundspolysizeinformal}, which is computed by polynomial-size circuits,  we derandomize the construction from Fact~\ref{t:boundsinformal}, using a method of Musatov~\cite{mus:t:spacekolm}. The key observation is that the combinatorial constraints of a balanced table can be checked by constant-depth circuits of relatively small size. The argument goes as follows: (a) these constraints require that in all sufficiently large rectangles of the table no element appears too many times; (b) thus one needs to count the occurrence of each element in every sufficiently large rectangle of the table; (c) by a well-known result of Ajtai~\cite{ajt:j:constantdepthcount},  this operation can be done with sufficient accuracy by constant-depth circuits with relatively small size. Therefore, we can use the Nisan-Wigderson~(\cite{nis-wig:j:hard}) pseudo-random generator NW-gen that fools bounded-size constant-depth circuits and has seeds of size polylogaritmic in the size of the output.  Since balanced tables with the required parameters are abundant, we infer that there exists a seed $s$ so that NW-gen$(s)$ is a balanced table with the required parameters. A balanced table is an object of size exponential in $n$, which implies that the seed $s$ has size polynomial in $n$. Moreover, the Nisan-Wigderson pseudo-random generator has the property that each bit of the output can be calculated separately in time polynomial in the length of the seed. This implies that the Kolmogorov extractor whose table is NW-gen$(s)$ can be computed by a polynomial-sized circuit that has $s$ hard-wired in its circuitry.

\section{Preliminaries}
\subsection{Notation and basic facts on Kolmogorov complexity}
The Kolmogorov complexity of a string $x$ is the length of the shortest effective description of $x$. There are several versions of this notion. We use here  the \emph{plain complexity}, denoted $C(x)$, and also the \emph{conditional plain complexity} of a string $x$ given a string $y$, denoted $C(x \mid y)$, which is the length of the shortest effective description of $x$ given $y$. The formal definitions are as follows.
We work over the binary alphabet $\zo$. A string is an element of $\{0,1\}^*$.
If $x$ is a string, $|x|$ denotes its length.  
Let $M$ be a Turing machine that takes two input strings and outputs one string. For any strings $x$ and $y$, define the \emph{Kolmogorov complexity} of $x$ conditioned by $y$ with respect to $M$, as 
$C_M(x \mid y) = \min \{ |p| \mid M(p,y) = x \}$.
There is a universal Turing machine $U$ with the following property: For every machine $M$ there is a constant $c_M$ such that for all $x$, $C_U(x \mid y) \leq C_M(x \mid y) + c_M$.
We fix such a universal machine $U$ and dropping the subscript, we write $C(x \mid y)$ instead of $C_U(x \mid y)$. We also write $C(x)$ instead of $C(x \mid \lambda)$ (where $\lambda$ is the empty string). The \emph{randomness rate} of a string $x$ is defined as ${\rm rate}(x) = \frac{C(x)}{|x|}$.  If $n$ is a natural number, $C(n)$ denotes the Kolmogorov complexity of the binary representation of $n$. For two $n$-bit strings $x$ and $y$, the information in $x$ about $y$ is denoted $I(x : y)$ and is defined as $I(x : y) = C(y \mid n) - C(y \mid x)$.

In this paper, the constant hidden in the $O(\cdot)$ notation only depends on the universal Turing machine.

For all $n$ and $k \leq n$, 

$2^{k-O(1)} < |\{x \in \zo^n \mid C(x\mid~n) < k\}| < 2^k$.


Strings $x_1, x_2, \ldots, x_k$ can be encoded in a self-delimiting way (\ie, an encoding from which each string can be retrieved) using $|x_1| + |x_2| + \ldots + |x_k| + 2 \log |x_2| + \ldots + 2 \log |x_k| + O(k)$ bits. For example, $x_1$ and $x_2$ can be encoded as $\overline{(bin (|x_2|)} 01 x_1 x_2$, where $bin(n)$ is the binary encoding of the natural number $n$ and, for a string $u = u_1 \ldots u_m$, $\overline{u}$ is the string $u_1 u_1 \ldots u_m u_m$ (\ie, the string $u$ with its bits doubled).

\if01
The Symmetry of Information Theorem (see \cite{zvo-lev:j:kol}) states that for all strings $x$ and $y$, $C(xy) \approx C(y) + C(y \mid x)$. More precisely:
$| (C(xy) - (C(x) + C(y \mid x)) | \leq O( \log C(x) + \log C(y))$.
In case the strings $x$ and $y$ have length $n$, it can be shown that
$| (C(xy) - (C(x) + C(y \mid x)) | \leq 2 \log n + O(\log \log n)$.
\fi

All the Kolmogorov extractors in this paper are ensembles of functions $f = (f_n)_{n \in \nat}$ of type $f_n : \zo^n \times \zo^{k(n)} \mapping \zo^{m(n)}$. For readability, we usually drop the subscript and the expression ``ensemble $f: \zo^n \times \zo^k \mapping \zo^m$'' is a substitute for
``ensemble $f = (f_n)_{n\in \nat}$, where for every $n$, $f_n : \zo^n \times \zo^{k(n)} \mapping \zo^{m(n)}$.''

For any $n \in \nat$, $[n]$ denotes the set $\{1,2, \ldots, n\}$.

\subsection{Approximate counting via polynomial-size constant-depth circuits}
In the derandomization argument used in the proof of Theorem~\ref{t:derandextract}, we need to count with constant-depth polynomial-size circuits. Ajtai~\cite{ajt:j:constantdepthcount} has shown that this can be done with sufficient precision.

\begin{theorem}
\label{t:ajtai}
(Ajtai's approximate counting with polynomial size constant-depth circuits.)
There exists a uniform family of circuits $\{G_n\}_{n \in \nat}$, of polynomial size and constant depth, such that for every $n$, for every $x \in \zo^n$, for every $a \in \{0, \ldots, n-1\}$, and for every $\epsilon > 0$,
\begin{itemize}
	\item If the number of $1$'s in $x$ is $\leq (1 - \epsilon)a$, then $G_n(x,a,1/\epsilon) = 1$,
	\item If the number of $1$'s in $x$ is $\geq (1 + \epsilon)a$, then $G_n(x,a,1/\epsilon) = 0$.
\end{itemize}
\end{theorem}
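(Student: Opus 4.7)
My plan is to follow Ajtai's original route, which breaks the task into three pieces: (i) a reduction from approximate counting at an arbitrary threshold $a$ to approximate majority, (ii) a randomized amplification construction of approximate majority in constant depth and polynomial size, and (iii) derandomization to obtain a uniform family.

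For (i), I would embed the input $x$ together with the threshold $a$ into a longer string $x'$ whose Hamming weight exceeds $|x'|/2$ by a multiplicative $\Theta(\epsilon)$ factor precisely when the weight of $x$ exceeds $(1+\epsilon)a$, and symmetrically for the other side. Such an embedding is obtainable in $\acz$ by padding $x$ with a fixed number of $0$'s and $1$'s that depends only on $a$. After this, it suffices to build a constant-depth polynomial-size circuit for approximate majority on $|x'|$ bits, with gap parameter $\Theta(\epsilon)$.

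For (ii), I would use the classical AND-OR amplification tree. At each level, each gate computes the AND (respectively OR) of a random subset of the previous level's outputs, with fanin tuned so that an input bias $\tfrac12 + \delta$ becomes an output bias $\tfrac12 + \delta'$ with $\delta' > 2\delta$. After $O(\log(1/\epsilon))$ alternating levels, the fraction of $1$'s is driven past $1 - 1/n$ or below $1/n$, and a final top gate reads off the decision. For each fixed input weight, a Chernoff bound shows that a random choice of wirings succeeds except with probability $2^{-n^{\Omega(1)}}$, and since a symmetric circuit's behavior depends only on the weight, the $n+1$ weight classes can be union-bounded to yield a single (non-uniformly) good wiring.

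The main obstacle is (iii), making the construction uniform in $n$. For this I would derandomize the random wirings using a pseudorandom generator secure against polynomial-size constant-depth circuits, for example a Nisan-type $\acz$-PRG. The predicate ``the wiring fails at some input weight'' is itself decidable by a polynomial-size constant-depth circuit, so a PRG with polylogarithmic seed length fools it; enumerating all seeds in logspace and selecting the first one that succeeds gives an explicit, uniformly computable wiring, and hence the desired uniform family $\{G_n\}$. The subtle point here is verifying that ``failure at some weight'' really is expressible in $\acz$ of the appropriate size, so that the PRG bound applies with the right parameters; everything else is routine amplification analysis.
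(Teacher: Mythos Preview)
The paper does not prove this theorem at all: it is stated in the preliminaries as a known result of Ajtai~\cite{ajt:j:constantdepthcount} and used as a black box. Immediately after the statement, the paper even remarks that the full uniform version is unnecessary for its applications, and that a non-uniform depth-$3$ version with a much easier proof (citing~\cite{vio:t:approxcount}) would suffice. So there is no ``paper's own proof'' to compare your proposal against.

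That said, your sketch is roughly the standard route, but step~(ii) has a gap as written. You assert that a union bound over the $n+1$ weight classes suffices ``since a symmetric circuit's behavior depends only on the weight,'' but the random wiring you build is \emph{not} a symmetric circuit. What is true is that, by symmetry of the random construction, the \emph{probability} (over wirings) of failure on an input depends only on its weight; this does \emph{not} let you conclude that a single good wiring for one representative of each weight class works for all inputs of that weight. You must either push the per-input failure probability below $2^{-n}$ and union-bound over all $2^n$ inputs (which forces the amplification parameters to be set more aggressively than you indicate), or add a preprocessing step that makes the circuit genuinely symmetric. Similarly, in step~(iii), the predicate you want the PRG to fool is ``there exists an input on which the wiring fails,'' which is an OR over $2^n$ evaluations of a poly-size $\acz$ circuit and hence has size $2^{n}\cdot\poly(n)$, not polynomial; you would need a PRG against size-$2^{O(n)}$ constant-depth circuits, and then argue that its seed is short enough to enumerate. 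These issues are all surmountable, but the sketch as stated does not handle them.
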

 We do not need the full strength (namely, the uniformity of $G_n$) of this theorem; the required level of accuracy (just $\epsilon > 0$) can be achieved by non-uniform polynomial-size circuits of depth $d=3$ (with a much easier proof, see~\cite{vio:t:approxcount}).

\subsection{Pseudo-random generator fooling bounded-size constant-depth circuits}
The derandomization in the proof of Theorem~\ref{t:derandextract} is done using the Nisan-Wigderson pseudo-random generator that ``fools" constant-depth circuits~\cite{nis-wig:j:hard}. Typically, it is required that the circuit to be fooled has polynomial size, but the proof works for  circuits of size $2^{n^\alpha}$ for some small constant $\alpha > 0$.
\begin{theorem}
\label{t:NWgen}
(Nisan-Wigderson pseudo random generator.)
For every constant $d$ there exists a constant $\alpha > 0$ with the following property. There exists a function $\mbox{NW-gen}:\zo^{O(\log^{2d+6}n)} \mapping \zo^n$ such that for any circuit $G$ of size $2^{n^\alpha}$ and depth $d$,
\[
 | \prob_{s \in \zo^{O(\log^{2d+6}n)}}[G(\mbox{NW-gen}(s)) = 1] - \prob_{z \in \zo^n} [G(z)=1]| < 1/100.
\]
Moreover, there is a procedure that on inputs $(n, i,s)$ produces the $i$-th bit of $\mbox{NW-gen}(s)$ in time
$\poly(\log n)$.

\end{theorem}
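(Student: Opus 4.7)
The plan is to follow the standard Nisan--Wigderson template, instantiating it with a function that is known to be average-case hard for constant-depth circuits. Fix the target depth $d$ and let $\ell = \Theta(\log^{2d+6} n)$ be a parameter to be tuned. I would set $\mbox{NW-gen}(s) = (f(s_{S_1}),f(s_{S_2}),\dots,f(s_{S_n}))$, where $s \in \zo^{O(\ell)}$, the $S_i$'s form a combinatorial $(\ell,\log n)$-design inside $[O(\ell)]$ (each $|S_i|=\ell$ and $|S_i \cap S_j|\leq \log n$ for $i\neq j$), and $s_{S_i}$ denotes the bits of $s$ indexed by $S_i$. The hard predicate $f$ will be parity on $\ell$ bits.

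The first ingredient is the hardness of $f$. By H{\aa}stad's switching lemma, parity on $\ell$ bits cannot be computed by depth-$d'$ circuits of size $2^{\ell^{1/(d'-1)}}$ with advantage noticeably better than $1/2$ over the uniform distribution. Choosing $\ell$ as above and $\alpha$ small enough, I can ensure parity is $(1/(200n))$-hard for depth-$(d+O(1))$ circuits of size $2^{n^\alpha}\cdot \poly(n)$, i.e., no such circuit predicts a random instance of parity with probability exceeding $1/2 + 1/(200n)$. The second ingredient is an explicit $(\ell,\log n)$-design over a universe of size $O(\ell^2/\log n) = O(\log^{2d+6}n)$, which can be built greedily or by the standard Nisan--Wigderson polynomial construction; this fixes the seed length stated in the theorem.

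The core of the argument is the Nisan--Wigderson hybrid reduction. Assuming a depth-$d$ circuit $G$ of size $2^{n^\alpha}$ distinguishes $\mbox{NW-gen}(s)$ from the uniform distribution on $\zo^n$ with advantage at least $1/100$, a standard hybrid and Yao-style next-bit argument produces an index $i$ and a depth-$d$ circuit $P$ of essentially the same size that predicts $f(s_{S_i})$ from $f(s_{S_1}),\dots,f(s_{S_{i-1}})$ with advantage $1/(100n)$. Using the small-intersection property of the design, for each $j<i$ the function $z \mapsto f(s_{S_j})$, viewed as a function of $s_{S_i}=z$, depends on at most $\log n$ bits of $z$ and can therefore be hard-wired as a CNF/DNF of size $\poly(n)$ by enumerating the remaining free bits of $s$ and fixing the majority value. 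Hard-wiring all $i-1$ such subcircuits into $P$ yields a depth-$(d+O(1))$ circuit of size $2^{n^\alpha}\cdot \poly(n)$ that computes parity on $\ell$ bits with advantage $1/(100n)$, contradicting the hardness bound of the previous paragraph and thus establishing the distinguishing bound $1/100$.

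Finally, efficient bitwise access is immediate: given $(n,i,s)$, one reads off $s_{S_i}$ (which requires only writing down the set $S_i$, computable in $\poly(\log n)$ from the explicit design) and computes the parity of $\ell=\poly(\log n)$ bits, all in time $\poly(\log n)$. The main technical obstacle I expect is the bookkeeping in the hard-wiring step: one has to be careful that the CNFs/DNFs approximating each $f(s_{S_j})$ remain of size and depth compatible with the $2^{n^\alpha}$, depth-$d$ bound for the hardness of parity, which forces the specific choice $\ell=\Theta(\log^{2d+6}n)$ rather than a smaller polylog. Balancing these parameters---design intersection size, number of hybrids, circuit blow-up in hard-wiring, and the switching-lemma size bound---is where all the exponents in the theorem statement come from.
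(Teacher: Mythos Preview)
The paper does not prove this statement; it is quoted from \cite{nis-wig:j:hard} as a preliminary, with only a remark that the usual proof (for polynomial-size distinguishers) extends to size $2^{n^\alpha}$. Your sketch is precisely that standard Nisan--Wigderson argument---parity as the hard predicate, an $(\ell,\log n)$-design, hybrid plus next-bit predictor, and hard-wiring the remaining coordinates as width-$\log n$ CNF/DNFs---so in approach you match the cited source.

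Two quantitative points deserve attention. First, a bookkeeping slip: you set $\ell=\Theta(\log^{2d+6}n)$ and then assert the design universe has size $O(\ell^2/\log n)=O(\log^{2d+6}n)$; with that choice of $\ell$ the quotient is $\log^{4d+11}n$. To obtain seed length $O(\log^{2d+6}n)$ one should rather take $\ell=\Theta(\log^{d+O(1)}n)$. Second, with any polylogarithmic $\ell$, H{\aa}stad's bound only yields hardness of parity against depth-$(d+O(1))$ circuits of size $2^{\ell^{1/(d+O(1))}}$, which is $2^{(\log n)^{O(1)}}$, i.e.\ quasipolynomial rather than $2^{n^\alpha}$. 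Hence the step ``choosing $\ell$ as above and $\alpha$ small enough, I can ensure parity is \ldots hard for \ldots size $2^{n^\alpha}\cdot\poly(n)$'' does not go through for any constant $\alpha>0$, since $n^\alpha$ eventually dominates every fixed power of $\log n$. This tension is already present in the quoted statement (polylogarithmic seed versus $2^{n^\alpha}$-size distinguishers) rather than a flaw in your method; your outline is the correct proof of the polynomial-size version, and the same template with $\ell=n^{\Theta(\alpha d)}$ handles size $2^{n^\alpha}$ at the cost of seed length $n^{\Theta(\alpha d)}$.
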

\section{Symmetry of information for random strings}
 \label{s:syminf}
 We start by proving the new form of the Chain Rule in Fact~\ref{t:chain}. The formal statement is as follows:
 \smallskip
 
 \begin{theorem}
 \label{t:lowerboundsym}
 For all strings $x \in \zo^n$ and $y \in \zo^n$ with $C(x \mid n) \geq 13 \log n +  I(x:y) + O(1)$ and $C(y \mid n) \geq 7 \log n + I(x:y) + O(1)$,
 \[
 \begin{array}{ll}
 C(xy \mid n) & \geq C(x \mid n) + C(y \mid x) - \log I(x : y)   \\
        & \quad \quad - O(C^{(2)}(x \mid n) + C^{(2)}(y \mid n) + C^{(2)}(y \mid x) ). 
 \end{array}
 \]
 \end{theorem}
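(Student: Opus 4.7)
My plan follows the roadmap sketched by the author: construct a Kolmogorov extractor $E : \zo^n \times \zo^n \mapping \zo^m$ tuned to the specific complexities of $x$ and $y$, with output length $m$ as close as possible to $C(x \mid n) + C(y \mid x)$, and then combine the extraction guarantee $C(E(x,y) \mid n) \geq m - O(1)$ with the reconstruction inequality $C(E(x,y) \mid n) \leq C(xy \mid n) + (\mbox{advice}) + O(1)$ to deduce the Chain Rule. Concretely, I first fix the three parameters $k_1 := C(x \mid n)$, $k_2 := C(y \mid x)$, $I := I(x:y)$, observing that they can be specified given $n$ using $O(C^{(2)}(x \mid n) + C^{(2)}(y \mid n) + C^{(2)}(y \mid x)) + \log I + O(1)$ bits (here $I = C(y \mid n) - C(y \mid x)$ is itself an integer of magnitude at most $n$). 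I then consider the uniformly enumerable complexity slices $A := \{ u \in \zo^n : C(u \mid n) \leq k_1 \}$, $B_u := \{ v \in \zo^n : C(v \mid u) \leq k_2 \}$, and $D := \{ v : C(v \mid n) \leq k_2 + I \}$, of sizes at most $2^{k_1 + 1}$, $2^{k_2 + 1}$, and $2^{k_2 + I + 1}$ respectively; note that $x \in A$ and $y \in B_x \cap D$, with the containment $y \in D$ precisely encoding the ``$I$ bits of dependence'' of $y$ on $x$.

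Next, by a probabilistic-method argument I would establish that for each triple $(k_1, k_2, I)$ there exists a total function $E_{k_1, k_2, I} : \zo^n \times \zo^n \mapping \zo^m$ with $m = k_1 + k_2 - \log I - O(1)$ that is ``good'' in the following sense: for every family $(A', \{B'_u\}_{u\in A'})$ of enumerable sets of the prescribed sizes and every enumerable target set $S \subseteq \zo^m$ of size $\leq 2^{m - O(1)}$, at least one pair $(u,v)\in A' \times B'_u$ with $v \in D$ satisfies $E_{k_1,k_2,I}(u,v) \notin S$. A counting argument along these lines succeeds because the ``effective'' volume of the relevant rectangle is only $\approx 2^{k_1 + k_2}/I$: each $v\in D$ can be shared across only $\approx |D|/|B_u| \approx 2^I$ preimages $u \in A$, so the union bound over all choices of $(A', \{B'_u\}, S)$ closes provided $m \leq k_1 + k_2 - \log I - O(1)$. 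The hypotheses $C(x \mid n) \geq 13 \log n + I + O(1)$ and $C(y \mid n) \geq 7 \log n + I + O(1)$ should supply precisely the slack needed to absorb the cost of indexing enumerable sets and close this union bound.

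Finally, since $(k_1, k_2, I)$ pin down $E_{k_1, k_2, I}$ uniquely, we obtain
\[ C(E_{k_1, k_2, I}(x, y) \mid n) \leq C(xy \mid n) + O(C^{(2)}(x \mid n) + C^{(2)}(y \mid n) + C^{(2)}(y \mid x)) + \log I + O(1), \]
while applying the extractor property to the actual pair $(x,y)$ and the enumerable set $S$ of all $m$-bit strings of complexity below $m - O(1)$ gives $C(E_{k_1,k_2,I}(x,y) \mid n) \geq m - O(1) = C(x \mid n) + C(y \mid x) - \log I - O(1)$. Rearranging these two inequalities produces the claimed lower bound on $C(xy \mid n)$.

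The hard part will be the middle step: sharpening the extractor existence proof so that the output length reaches $k_1 + k_2 - \log I$ rather than the easier $k_1 + k_2 - O(\log n)$ that a generic two-source extractor would provide. This demands exploiting the correlation slice $D$ directly in the counting argument (rather than treating $y$ as merely independent of $x$ on $B_x$), and carefully balancing the number of enumerable ``bad'' target sets $S$ and enumerable rectangles $(A', \{B'_u\})$ against the effective rectangle volume; the constants $13$ and $7$ in the hypotheses should emerge from this bookkeeping.
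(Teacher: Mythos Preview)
Your plan has the right overall shape (build a tailored extractor, then compare $C(E(x,y)\mid n)$ with $C(xy\mid n)$), but the central combinatorial step is misformulated in a way that makes the argument collapse.

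\medskip
\textbf{The ``good'' property is too weak.} You require that for every enumerable family $(A',\{B'_u\},S)$ \emph{some} pair $(u,v)$ with $v\in D$ satisfies $E(u,v)\notin S$. This only says the map is not entirely into $S$; it tells you nothing about the specific pair $(x,y)$. To conclude $C(E(x,y)\mid n)\geq m-O(1)$ you need a \emph{density} statement: the number of pairs in the relevant region that land in $S$ must be small enough that membership gives a short description contradicting one of the complexity lower bounds on $x$ or $y$. As written, your final step ``applying the extractor property to the actual pair $(x,y)$'' is a non sequitur.

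\medskip
\textbf{The ``effective volume'' heuristic is unsupported.} You assert that each $v\in D$ is shared by only $\approx 2^{I}$ of the sets $B'_u$, giving effective volume $\approx 2^{k_1+k_2}/I$. Nothing in the setup forces this: the family $\{B'_u\}$ is arbitrary (subject only to $|B'_u|\approx 2^{k_2}$ and $B'_u\subseteq D$), and a given $v$ may lie in every $B'_u$. Even if you restrict to the actual conditional slices $B_u=\{v:C(v\mid u)\leq k_2\}$, there is no such multiplicity bound. Moreover, a probabilistic existence argument that union-bounds over \emph{all} families $\{B'_u\}_{u\in A'}$ faces $\binom{N}{K_2}^{K_1}$ choices rather than $\binom{N}{K_2}$; this blows up the bound and is why the paper does \emph{not} work with row-dependent column sets.

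\medskip
\textbf{What the paper does instead.} The paper avoids both problems by never conditioning the column set on the row. It takes $B_y=\{v:C(v\mid n)\leq t_y\}$ with $t_y=C(y\mid n)$ (not $C(y\mid x)$), proves a standard balanced-table property for genuine rectangles $B_1\times B_2$, and sets $m\approx t_x+t_y-\log d$. The dependence $I(x:y)$ is absorbed into the \emph{defect} parameter $d$ (so the guarantee is $C(E(x,y)\mid n,m)\geq m-d$), not into $m$ itself. The extraction step is then a clean two-stage argument: bad rows are at most $K_x$, hence have complexity $<t_x$, so $x$ is good; then the $A$-cells in $\{x\}\times B_y$ number at most $2^{t_y-d+O(1)}$, whence $C(y\mid x)<t_y-I(x:y)=C(y\mid x)$, a contradiction. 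Since $t_y-I(x:y)=C(y\mid x)$, the arithmetic still delivers $C(xy\mid n)\geq C(x\mid n)+C(y\mid x)-\log I(x:y)-O(C^{(2)}\mbox{ terms})$. You should reorganize along these lines: use a rectangle-balanced table, put the mutual information into the deficiency $d$, and run the bad-row/good-row argument.
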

 \smallskip
 
 \begin{proof} Let $x$ and $y$ be as in the statement of the theorem. We introduce some notation:
 
\begin{itemize}
	\item $t_x = C(x \mid n)$,
	\item $t_y = C(y \mid n)$,
	\item $t_{y,x} = C(y \mid x)$,
	\item $d = 2(C(t_x \mid n) + C(t_y \mid n) + C(t_{y,x} \mid n)) + I(x :~y) + O(1)$, 
	
	(note that $d= O(C^{(2)}(x \mid n) + C^{(2)}(y \mid n) + C^{(2)}(y \mid~x)) + I(x : y)$),
	\item $k_x = t_x - 2(C(t_x \mid n) + C(t_y \mid n) + C(t_{y,x} \mid n)) - O(1)$,
	\item $k_y = t_y - O(1)$,
	\item $m = k_x + k_y - \log d - O(1)$,
\end{itemize}
 where the constant $O(1)$ only depends on the universal machine and will be chosen later.
 We also denote $K_x = 2^{k_x}$, $K_y = 2^{k_y}$, $N = 2^n$, $D= 2^d$ and $M =2^m$. We  state two claims, which we prove later, that immediately establish the theorem. The first claim shows the existence of a function $E : \zo^n \times \zo^n \mapping \zo^m$ that satisfies certain combinatorial constraints similar to those of a balanced table, whose parameters are tailored for what we need. We view $E$ as an $[N]$-by-$[N]$ table colored with colors in $[M]$. A rectangle $B_1 \times B_2$, where $B_1 \subseteq [N]$ and $B_2 \subseteq [N]$, is the part of the table formed by the rows in $B_1$ and the columns in $B_2$. For $A \subseteq [M]$, we say that a cell $(u,v)$ of the table is an $A$-cell, if $E(u,v) \in A$. 
 
 \begin{claim}
 \label{c:balancedt}
 There exists $E: \zo^n \times \zo^n \mapping \zo^m$ such that for every rectangle $B_1 \times B_2$, with $|B_1| \geq K_x$, $|B_2| \geq K_y$, and for every $A \subseteq \zo^m$ with
 $|A| \geq \frac{M}{D}$, the number of $A$-cells in $B_1 \times B_2$ is $\leq 2 \cdot \frac{|A|}{M} \cdot |B_1| \cdot |B_2|$.
 \end{claim}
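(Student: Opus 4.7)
The plan is to construct $E$ by the probabilistic method: fill each cell of the $[N]\times[N]$ table by drawing $E(u,v)$ independently and uniformly at random from $[M]$, and show that with positive probability the random table satisfies the balancedness property simultaneously for every admissible triple $(B_1,B_2,A)$.

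First I would reduce to the ``extremal'' case $|B_1|=K_x$, $|B_2|=K_y$, $|A|=\lceil M/D\rceil$. A larger rectangle can be tiled by at most $\lceil |B_1|/K_x\rceil\cdot\lceil |B_2|/K_y\rceil$ minimal ones, and any larger color set by at most $\lceil |A|D/M\rceil$ color subsets of the minimal size; so if the inequality holds for extremal triples with some constant $c<2$ in place of the factor $2$, a routine sum over the tiles yields the factor-$2$ bound for every admissible $(B_1',B_2',A')$. For a fixed extremal triple, the number of $A$-cells in $B_1\times B_2$ is a sum of $K_xK_y$ i.i.d.\ Bernoulli$(|A|/M)$ variables with mean $\mu=(|A|/M)K_xK_y\geq K_xK_y/D$, and the Chernoff bound gives
\[
\Pr\bigl[\#A\text{-cells in }B_1\times B_2>1.5\mu\bigr]\leq e^{-\mu/12}.
\]

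Next I would union-bound over all extremal triples. There are at most $\binom{N}{K_x}\binom{N}{K_y}\binom{M}{\lceil M/D\rceil}$ of them, and the standard estimate $\binom{a}{b}\leq(ea/b)^b$ gives
\[
\log_2(\text{\# extremal triples})\leq K_x(n-k_x+O(1))+K_y(n-k_y+O(1))+(M/D)(d+O(1)).
\]
The job is then to check that $\mu/12$ (in nats, i.e.\ times $\log_2 e$) strictly dominates this quantity. Term by term: the hypothesis $t_y\geq 7\log n+I(x{:}y)+O(1)$ combined with $d\leq 6\log n+I(x{:}y)+O(1)$ gives $k_y-d\geq \log n-O(1)$, so that $K_y/D\geq \Omega(n)$ and the first term $K_x(n-k_x+O(1))$ is at most a small fraction of $\mu=K_xK_y/D$; similarly $t_x\geq 13\log n+I(x{:}y)+O(1)$ pays for the $2(C^{(2)}(x\mid n)+C^{(2)}(y\mid n)+C^{(2)}(y\mid x))$ loss in $k_x=t_x-O(\log n)$ and yields $k_x-d\geq \log n-O(1)$, controlling the second term. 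For the third term, $m=k_x+k_y-\log d-O(1)$ gives $M/D=K_xK_y/(D\cdot d\cdot 2^{O(1)})$, so $(M/D)(d+O(1))\leq \mu\cdot(1+O(1)/d)/2^{O(1)}$, which is dominated by $\mu/12$ once the constant hidden in the $O(1)$ defining $m$ is taken large enough.

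The main obstacle, I expect, is precisely this parameter bookkeeping. The hypotheses $t_x\geq 13\log n+I(x{:}y)$ and $t_y\geq 7\log n+I(x{:}y)$ are tight: they leave only a $\log n-O(1)$ buffer between $\min(k_x,k_y)$ and $d$, and the third union-bound term is of the same order as $\mu$ up to a factor of $d$ times a tunable constant. Every $O(1)$ absorbed into the definitions of $k_x$, $k_y$, $d$, and $m$ must therefore be chosen explicitly so that the union-bound exponent stays strictly below $\mu\ln 2/12$. Once this is verified, the total failure probability is $<1$, and a table $E$ with the required balancedness exists, completing the claim.
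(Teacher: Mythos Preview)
Your proposal is correct and follows essentially the same route as the paper: probabilistic construction, Chernoff bound on each fixed triple, and a union bound over all extremal triples, with the parameter verification hinging on $k_x,k_y\geq d+\log n+O(1)$ (coming from the hypotheses $t_x\geq 13\log n+I(x{:}y)$ and $t_y\geq 7\log n+I(x{:}y)$) and on the tunable constant in $m=k_x+k_y-\log d-O(1)$. The only minor deviation is in the reduction to extremal triples: you argue via tiling and a preliminary constant $c<2$, whereas the paper simply asserts the reduction (which can be done cleanly by averaging over random minimal sub-triples, avoiding the ceiling factors your tiling introduces); either way the reduction is routine and the substance of the argument is identical.
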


 Note that given $n, t_x, t_y, t_{y,x}$, $C(t_x \mid~n), C(t_y \mid~n)$, $C(t_{y,x} \mid~n)$, one can effectively enumerate the tables that satisfy the claim. Let $E$ be the first table that appears in this enumeration. We denote the tuple $(t_x, t_y, t_{y,x}, C(t_x \mid~n), 
 C(t_y \mid~n), C(t_{y,x} \mid~n))$ by $\Lambda$. Given $n$, the tuple $\Lambda$ can be encoded in a self-delimiting way using $C(t_x \mid~n) + C(t_y \mid~n)+ C( t_{y,x}\mid~n) + C(C(t_x \mid~n) + C(C(t_y \mid~n)) + C( C(t_{y,x} \mid~n)) + 2 \log C(t_x \mid~n) + 2 \log C(t_y \mid~n)+ 2 \log C( t_{y,x} \mid~n) + 2 \log C(C(t_x \mid~n) + 2 \log C(C(t_y \mid~n)) + 2 \log C( C(t_{y,x} \mid~n)) + O(1)$ bits, a value which we denote by $\lambda$. Using $C(C(t_x \mid n)) \leq \log C(t_x \mid n)$ and the other similar inequalities, we see that for an appropriate choice of the constants, $d > \lambda + I(x : y)$. 
 
 The second claim shows that $E$ extracts almost all the randomness in $x$ and $y$.
 \begin{claim}
 \label{c:extract}
 $C(E(x,y) \mid n,m) \geq m-d$.
 \end{claim}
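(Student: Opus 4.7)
The plan is to prove Claim~\ref{c:extract} by contradiction, using the balanced property of $E$ to squeeze $(x,y)$ into a set whose size is incompatible with the assumed complexities $t_x$ and $t_{y,x}$. Assume $C(E(x,y)\mid n,m) < m-d$, and let $A=\{z\in\zo^m : C(z\mid n,m)<m-d\}$, so $E(x,y)\in A$ and $|A|\leq 2^{m-d}-1$. Take $B_1=\{u\in\zo^n : C(u\mid n)\leq t_x\}$ and $B_2=\{v\in\zo^n : C(v\mid n)\leq t_y\}$; the standard counting bounds for Kolmogorov-bounded sets give $2^{t_{\cdot}-O(1)}\leq|B_{\cdot}|\leq 2^{t_{\cdot}+1}$, so with the constants in $k_x,k_y$ chosen suitably one has $|B_1|\geq K_x$, $|B_2|\geq K_y$, and $(x,y)\in B_1\times B_2$. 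Padding $A$ to any $A^+\supseteq A$ with $|A^+|=M/D$ can only increase the number of $A$-colored cells, so Claim~\ref{c:balancedt} gives that in every subrectangle $B_1'\times B_2$ with $B_1'\subseteq B_1$ and $|B_1'|\geq K_x$, the number of $A$-cells is at most $(2/D)\,|B_1'|\,|B_2|$.

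Next I run a row-concentration argument. Put $S_u=\{v\in B_2:E(u,v)\in A\}$, fix the threshold $T=\lceil 2^{t_y-d+3}\rceil$, and let $B_1''=\{u\in B_1:|S_u|>T\}$. If $|B_1''|\geq K_x$, summing row counts yields $T\cdot|B_1''|<\sum_{u\in B_1''}|S_u|\leq (2/D)|B_1''|\,|B_2|\leq 2^{t_y-d+2}\cdot |B_1''|$, contradicting $T\geq 2^{t_y-d+3}$. Hence $|B_1''|<K_x$. Crucially, since $A$, $B_1$, and $B_2$ are semi-decidable from $(n,\Lambda)$ and since $d,m,T$ are computable functions of $\Lambda$ (using that $I(x:y)=t_y-t_{y,x}$ is already encoded in the tuple $\Lambda$), both $B_1''$ and each row-slice $S_x$ are semi-enumerable from $(n,\Lambda)$ and $(x,\Lambda)$ respectively.

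Now I split on whether $x\in B_1''$. In Case~(a), $x\notin B_1''$, so $|S_x|\leq T\leq 2^{t_y-d+4}$ and $y\in S_x$; concatenating the self-delimiting encoding of $\Lambda$ (of length $\lambda$) with the fixed-length index of $y$ in the enumeration of $S_x$ gives $C(y\mid x)\leq\lambda+(t_y-d)+O(1)$. Using $t_{y,x}=C(y\mid x)$ and $I(x:y)=t_y-t_{y,x}$, this rearranges to $d\leq\lambda+I(x:y)+O(1)$; substituting $d=2S_C+I(x:y)+C_0$ and $\lambda=S_C+O(\log S_C)+O(1)$ (the latter via $C(C(t_{\cdot}\mid n))\leq\log C(t_{\cdot}\mid n)+O(1)$) collapses to $S_C+C_0\leq O(\log S_C)+O(1)$, which fails once the constant $C_0$ in the definition of $d$ is chosen larger than the universal-machine-dependent slack. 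In Case~(b), $x\in B_1''$ with $|B_1''|<K_x$, and $\bigl(\Lambda,\text{index of }x\text{ in the enumeration of }B_1''\bigr)$ gives $C(x\mid n)\leq\lambda+k_x+O(1)$; substituting $C(x\mid n)=t_x$ and $k_x=t_x-2S_C-C_1$ yields the analogous $S_C+C_1\leq O(\log S_C)+O(1)$, again failing once $C_1$ is taken large.

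The main obstacle I anticipate is the constant-bookkeeping: the ``$O(1)$'' slacks from self-delimiting concatenation, the ceiling in $T$, the $A\subseteq A^+$ padding, and the universal-machine overhead in every Kolmogorov-complexity inequality must all be absorbed into a single coherent choice of the constants hidden in $d$, $k_x$, $k_y$, $m$, and $\lambda$. The pre-claim remark ``for an appropriate choice of the constants, $d>\lambda+I(x:y)$'' is precisely the quantitative form this calibration must take, and it is exactly what forces the contradiction in both cases.
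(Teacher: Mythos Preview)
Your proof is correct and follows essentially the same strategy as the paper's: define the small color set $A$, define the column set $B_2=B_y$ by the bound $C(v\mid n)\le t_y$, use the balanced property of $E$ to bound the number of rows having too many $A$-cells in $B_y$ by $K_x$, argue that $x$ is not such a row, and then describe $y$ by its rank among the $A$-cells in row $x$ to contradict $d>\lambda+I(x:y)$. The paper's write-up is slightly leaner in three places: it never introduces your $B_1$ (bad rows are taken over all of $\zo^n$, which is harmless since the balanced property applies to any $K_x$-large row set); it avoids your case split by observing directly that every bad row $u$ satisfies $C(u\mid n)\le k_x+\lambda<t_x$, so $x$ is automatically good; and instead of padding $A$ to $A^+$ of size $M/D$, it defines $A$ with a threshold $m-d+O(1)$ chosen so that $|A|\ge M/D$ from the start. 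None of these differences is substantive.
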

 The theorem follows easily from the two claims. Note that the calculation of $E(x,y)$ requires a description of $x$, $y$ and $\lambda$ bits for the self-delimited description of $E$. Then
 \[
 \begin{array}{ll}
 C(xy \mid n) & \geq C(E(x,y) \mid n, m) - \lambda - O(1)\\
  & \geq m - d - \lambda - O(1)\\
 & = k_x + k_y - \log d  - d - \lambda - O(1) \\
 & = t_x + t_y - I(x : y) \\
 & \quad\quad  - 6 (C^{(2)}(x \mid n) + C^{(2)}(y \mid n) + C^{(2)}(y \mid x) ) \\
 &  \quad\quad  - \log ( 2 (C^{(2)}(x \mid n) + C^{(2)}(y \mid n) + C^{(2)}(y \mid x)) + I(x : y)) - O(1) \\
 & \geq t_x + t_y - I(x : y) - \log I(x : y) \\
 & \quad\quad  - O (C^{(2)}(x \mid n) + C^{(2)}(y \mid n) + C^{(2)}(y \mid x) ), 
 \end{array}
 \]
where in the last line we have used the fact that 
\[
\begin{array}{ll}
\log ( 2 (C^{(2)}(x \mid n) + C^{(2)}(y \mid n) + C^{(2)}(y \mid x)) + I(x : y)) & \\
\leq \log ( 2 (C^{(2)}(x \mid n) + C^{(2)}(y \mid n) + C^{(2)}(y \mid x))) + \log I(x : y).& \\
\end{array}
\]
 Since $t_x = C(x \mid n)$ and $t_y - I(x : y) = C(y \mid~n) - (C(y \mid~n) - C(y \mid~x))= C(y \mid~x)$, the conclusion follows.
\smallskip

 It remains to prove the two claims.
 \smallskip
 
 \emph{Proof of Claim~\ref{c:balancedt}.} We use the probabilistic method. The details are presented in the Appendix.
 \smallskip
 
 \emph{Proof of Claim~\ref{c:extract}.} Suppose that $C(E(x,y) \mid n, m) < m-d$.
 Let $A = \{w \in \zo^m \mid C(w \mid n, m) \leq m-d + O(1)\}$, where the constant $O(1)$ (depending only on the universal machine) is chosen so that $|A| \geq 2^{m-d} = \frac{M}{D}$. It also holds that $|A| \leq 2^{m-d + O(1)}$.
 
We define $B_{y} = \{v \in \zo^n \mid C(v \mid n) \leq t_y\}$.
 For a convenient choice of the constant appearing in the definition of $k_y$, it holds that $|B_{y}| \geq K_y$. Also note that $|B_{y}| \leq 2^{t_y+1}$.
 
 We say that a row $u \in \zo^n$ is \emph{bad}, if the number of $A$-cells in the $\{u\} \times B_{y}$ rectangle of $E$ is $ > 2 \cdot \frac{|A|}{M} \cdot |B_{y}|$.
 
 The number of bad rows is bounded by $K_x$ (otherwise, $E$ would not satisfy the requirement in Claim~\ref{c:balancedt} for the rectangle formed by the bad rows and $B_y$). Note that, given $n, t_x, t_y, t_{y,x}, C(t_x \mid n), C(t_y \mid n), C(t_{y,x} \mid n)$, one can enumerate the set of bad rows. Therefore a bad row $u$ can be described by its rank in an enumeration of the set of bad rows, and by the information $\Lambda$ required to run this enumeration. Therefore if $u$ is a bad row, then
 \[
 \begin{array}{ll}
 C(u \mid n) & \leq k_x + \lambda \\
 & <  t_x.
 \end{array}
 \]
 Since $C(x \mid n) = t_x$, $x$ is a good row. Therefore the number of $A$-cells in $\{x \} \times B_{y}$ is
 \[
 \begin{array}{ll}
 < 2 \cdot \frac{|A|}{M} \cdot |B_{y}| \\
 < 2 \cdot \frac{2^{m-d + O(1)}}{2^m} \cdot 2^{t_y+1} \\
 = 2^{t_y - d +O(1)}.
 \end{array}
\]
 By our assumption, $(x,y)$ is an $A$-cell, and, obviously, it is in the $\{x\} \times B_y$ rectangle. Given $x$, $y$ can be described by the rank of $(x,y)$ in an enumeration of $A$-cells in $\{x\} \times B_{y}$ and by the information $\Lambda$ required to run this enumeration.
 
 Thus,
 \[
 \begin{array}{ll}
C(y \mid x)  &\leq t_y - d + O(1) + \lambda \\
 & < t_y - I(x : y) \\
 & = C(y \mid n) - (C(y \mid n) - C(y \mid x)) \\
 & = C(y \mid  x),
 \end{array}
 \]
 which is a contradiction.
 \qed \end{proof}
 \smallskip
 
From Theorem~\ref{t:lowerboundsym}, it is easy to derive the strong Symmetry of Information relation for random strings stated in Fact~\ref{t:symrandom}. The formal statement is as follows.
\begin{theorem}
\label{t:symrandstrings}
For every constant $c \geq 0$, there exists a constant $c' \geq 0$ with the following property: For every $n \in \nat$, for every $c$-random string $x \in \zo^n$ and every $c$-random string $y \in \zo^n$, if $y$ is $c$-random conditioned by $x$, then $x$ is $c'$-random conditioned by $y$.

\end{theorem}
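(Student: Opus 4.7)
The plan is to reduce Theorem~\ref{t:symrandstrings} to a direct application of Theorem~\ref{t:lowerboundsym} combined with the easy upper bound $C(xy \mid n) \leq C(y \mid n) + C(x \mid y) + C^{(2)}(y \mid n) + O(1)$ recalled in the introduction. For the finitely many small values of $n$ the statement is trivial by taking $c' \geq n$, so fix $c$ and assume $n$ is large enough in terms of $c$ from now on.

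First I would verify that every correction term appearing in Theorem~\ref{t:lowerboundsym} collapses to $O(1)$ (with a constant depending only on $c$). Because $x$ and $y$ are $c$-random and $y$ is $c$-random conditioned by $x$, each of $C(x \mid n), C(y \mid n), C(y \mid x)$ lies in the interval $[n-c,\, n+O(1)]$, a window of only $O(1)$ values. Hence $C^{(2)}(x \mid n)$ and $C^{(2)}(y \mid n)$ are $O(1)$, and the same holds for $C^{(2)}(y \mid x)$ because $n=|x|$ is recoverable from $x$, so that $C(y \mid x)$ is determined by $x$ up to a constant number of possibilities. Moreover $I(x:y) = C(y \mid n) - C(y \mid x) \leq (n+O(1)) - (n-c) = O(1)$, so $\log I(x:y)=O(1)$ as well.

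Next I would check the quantitative hypotheses of Theorem~\ref{t:lowerboundsym}: for $n$ sufficiently large in terms of $c$, both $C(x \mid n) \geq n-c \geq 13\log n + I(x:y) + O(1)$ and $C(y \mid n) \geq n-c \geq 7\log n + I(x:y) + O(1)$. The theorem therefore applies and yields
\[
C(xy \mid n) \;\geq\; C(x \mid n) + C(y \mid x) - \log I(x:y) - O\bigl(C^{(2)}(x \mid n) + C^{(2)}(y \mid n) + C^{(2)}(y \mid x)\bigr) \;\geq\; 2n - O(1),
\]
with the hidden constant depending only on $c$. The easy direction of the chain rule gives $C(xy \mid n) \leq C(y \mid n) + C(x \mid y) + C^{(2)}(y \mid n) + O(1) \leq n + C(x \mid y) + O(1)$. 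Combining the two inequalities yields $C(x \mid y) \geq n - O(1)$, establishing that $x$ is $c'$-random conditioned by $y$ for a suitable constant $c'$ depending only on $c$.

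The substantive work is entirely done by Theorem~\ref{t:lowerboundsym}; the only thing that really needs care is verifying that each of the additive slack terms — $\log I(x:y)$ and the three $C^{(2)}$ quantities — collapses to $O(1)$ under the present hypotheses. The qualitatively trickiest bookkeeping step is arguing $C^{(2)}(y \mid x) = O(1)$, which relies on the fact that once $x$ is known, its length $n$ and hence the small window $[n-c,\,n+O(1)]$ in which $C(y \mid x)$ must lie are also effectively available.
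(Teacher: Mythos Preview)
Your proposal is correct and follows essentially the same approach as the paper: apply Theorem~\ref{t:lowerboundsym} after checking that all the slack terms $\log I(x:y)$ and the three $C^{(2)}$ quantities are $O(1)$, obtain $C(xy \mid n) \geq 2n - O(1)$, and combine with the easy upper bound $C(xy \mid n) \leq C(y \mid n) + C(x \mid y) + O(1)$ to conclude $C(x \mid y) \geq n - O(1)$. If anything, your write-up is more careful than the paper's, explicitly handling small $n$ and the verification of the hypotheses of Theorem~\ref{t:lowerboundsym}.
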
 
 \begin{proof}  Note that if $C(x \mid n) \geq n-c$ and $C(y \mid x) \geq n-c$, then $C^{(2)}(x \mid n) = O(1)$, $C^{(2)}(y \mid n) = O(1)$ and $C^{(2)}(x \mid n) = O(1)$. Then $I(x : y) = O(1)$.
 
 Therefore, from Theorem~\ref{t:lowerboundsym}, we obtain $C(xy \mid n) \geq 2n - O(1)$. We also have
 $C(xy \mid~n) \leq C(y \mid~n) + C(x \mid~y) + 2 C (C(y \mid~n)) + O(1) = C(y \mid~n) + C(x \mid~y) +O(1)$. Thus, $C(x \mid~y) \geq 2n - C(y \mid~n) - O(1) = n - O(1)$.
 \qed \end{proof}

\section{Randomness extraction with small advice}
In this section we study the amount of non-uniformity that is necessary for randomness extraction from a single source of randomness.

Vereshchagin and Vyugin~\cite{ver-vyu:j:kolm} show the limitations of what can be extracted with a bounded quantity of advice. To state their result, let us fix $n = $ length of the source, $h =$ number of bits of advice that is allowed, and $m = $ the number of extracted bits. Let  $H = 2^{h+1}-1$.
\smallskip

\begin{theorem}[\cite{ver-vyu:j:kolm}]
\label{t:vervyu}
There exists a string $x \in \zo^n$  with $C(x) > n- H \log (2^m+1) \approx n - Hm$ such that any string $z \in \zo^m$ with $C(z \mid x) \leq h$ has complexity $C(z) < h + \log n + \log m + O(\log \log n, \log \log m)$. 
\end{theorem}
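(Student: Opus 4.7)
The plan is a pigeonhole argument on the joint behavior of $x$ under every program of length at most $h$. Enumerate the $H = 2^{h+1}-1$ strings of $\zo^{\leq h}$ in canonical order as $p_1, \ldots, p_H$, and for each $x \in \zo^n$ form the behavior tuple
\[
\vec z(x) = (U(p_1, x), \ldots, U(p_H, x)) \in (\zo^m \cup \{\bot\})^H,
\]
where $\bot$ marks a computation that does not halt with a length-$m$ output. Since the codomain has at most $(2^m+1)^H$ elements, some tuple $\vec z^{*} = (z_1^{*}, \ldots, z_H^{*})$ is the behavior of at least $T := 2^n / (2^m+1)^H$ distinct strings $x$; call this class $C^{*}$.

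The key step is to argue that this popular tuple admits a short absolute description. Given $n, m, h$, one dovetails over the halting computations $U(p_i, x)$ and outputs the first tuple whose witness count reaches $\lceil T \rceil$. A careful workaround for the fact that ``$U(p_i, x) = \bot$'' is not recursively enumerable is needed here — for instance by restricting the pigeonhole to tuples with no $\bot$ entries, or by replacing $U$ with a suitable time-bounded surrogate — and with it one obtains $C(\vec z^{*}) \leq \log n + \log m + O(\log \log n + \log \log m)$.

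Granted this, I would pick $x \in C^{*}$ maximising $C(x \mid \vec z^{*})$. Since $|C^{*}| \geq T$, one has $C(x \mid \vec z^{*}) \geq n - H \log(2^m+1) - O(1)$, and the standard inequality $C(x) \geq C(x \mid \vec z^{*}) - O(1)$ yields the claimed lower bound on $C(x)$. For the upper bound on $C(z)$, any $z$ with $C(z \mid x) \leq h$ satisfies $z = U(p_i, x) = z_i^{*}$ for some $i \in [H]$, so the pair $(\vec z^{*}, i)$ describes $z$ in
\[
C(\vec z^{*}) + \log H + O(1) \leq h + \log n + \log m + O(\log \log n + \log \log m)
\]
bits, matching the theorem.

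The main obstacle is certifying pigeonhole class sizes when $\vec z^{*}$ contains $\bot$ entries: non-halting of $U(p_i, x)$ is not r.e., so a naive dovetailed search can fail to terminate. Ensuring that a clean workaround still delivers the sharp $O(\log \log n + \log \log m)$ additive term — rather than, say, an extra $\log h$ or $\log t$ arising from a time bound — is the delicate part of the argument; once $\vec z^{*}$ is in hand, the rest is a direct counting exercise tracking the two halves of the conclusion (class size $T$ giving $C(x) > n - H \log(2^m+1)$, and index-within-tuple cost $\log H$ giving the $h$ term in $C(z)$).
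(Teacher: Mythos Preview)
The paper does not prove Theorem~\ref{t:vervyu}; it is stated with attribution to \cite{ver-vyu:j:kolm} and then invoked as a black box in the proof of Theorem~\ref{t:limits}. There is therefore no in-paper argument to compare your proposal against.

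Evaluating the proposal on its own: the pigeonhole on $(\zo^m\cup\{\bot\})^H$ is exactly the combinatorial source of the constant $H\log(2^m+1)$, and your split of the conclusion (class size $T$ gives the lower bound on $C(x)$; index-in-tuple cost $\log H$ gives the $h$ term in $C(z)$) is correct. But the gap you flag is genuine, and neither of your suggested patches closes it. Restricting to $\bot$-free tuples can be vacuous: a single program $p_i$ that diverges on every $n$-bit input eliminates all such tuples, so the restricted pigeonhole is empty. The time-bounded surrogate $U^t$ has a more serious defect than a stray $\log t$ term. For the $x$ you eventually select, some $U(p_i,x)$ may halt only after step $t$; its output $z$ then need not coincide with any coordinate of your stage-$t$ tuple $\vec z^{*}$, and the upper bound on $C(z)$ fails outright. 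Choosing $t$ large enough to cover all relevant halting times for the eventually-chosen $x$ would require a $t$ whose complexity you cannot control.

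The Vereshchagin--Vyugin argument circumvents this not by locating a single popular tuple but by an effective iterative construction: an enumeration driven solely by observed halting events builds, from $n,m,h$ alone, a set $S\subseteq\zo^m$ of at most $H$ ``allowed'' outputs while pruning a pool of candidates for $x$, arranged so that at least $2^n/(2^m+1)^H$ candidates survive with every discovered short-program output lying in $S$. Your pigeonhole is the non-effective shadow of that process; making it algorithmic is precisely the missing step.
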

\smallskip

The next theorem, a consequence of Theorem~\ref{t:vervyu}, shows that no Kolmogorov extractor for sources with randomness rate $\sigma$ and that uses $h$ bits of advice about the source can output strings with randomness rate larger than $1 - (1-~\sigma)/H$.
\smallskip

\begin{theorem}
\label{t:limits}
Assume that the parameters $m, h, \sigma$ are computable from $n$ and satisfy the following relations: $0 < \sigma < 1, h > 0, 0 < m < n$, $m = \omega (\log n + h)$. 

Let $f: \zo^n \times \zo^h \mapping \zo^m$ be a computable ensemble of functions such that for every $ x \in \zo^n$ with $C(x) \geq \sigma \cdot n$, there exists a string $\alpha_x$ such that $C(f(x , \alpha_x)) \geq (1-\epsilon)\cdot m$. Then $\epsilon \geq \frac{1-\sigma}{H}- o(1)$.
\end{theorem}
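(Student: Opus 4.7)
The strategy is to invoke Theorem~\ref{t:vervyu} at an output length $m^{**} \leq m$ small enough that the string it produces has Kolmogorov complexity $\geq \sigma n$ (so the extractor hypothesis applies to it), and then transfer the V-V upper bound from length-$m^{**}$ strings to the actual $m$-bit extractor output by simple truncation. Concretely, set $m^{**} = \min\bigl(m,\,\lfloor (1-\sigma)n/H\rfloor\bigr)$, shaved by $O(1)$ if necessary so that $H\log(2^{m^{**}}+1) \leq (1-\sigma)n$ (we may assume $m^{**}\geq 1$, since otherwise $(1-\sigma)/H \leq 1/n$ and the claim is vacuous). Applying Theorem~\ref{t:vervyu} at parameters $(n, m^{**}, h)$ then gives $x^* \in \zo^n$ with
\[
 C(x^*) \;>\; n - H\log(2^{m^{**}}+1) \;\geq\; \sigma n
\]
such that every $z \in \zo^{m^{**}}$ with $C(z \mid x^*) \leq h$ satisfies $C(z) < h + \log n + \log m^{**} + O(\log\log n + \log\log m^{**})$.

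Because $C(x^*) \geq \sigma n$, the hypothesis on $f$ supplies $\alpha^* \in \zo^h$ with $C(f(x^*,\alpha^*)) \geq (1-\epsilon)m$. Let $z^*$ be the $m^{**}$-bit prefix of $f(x^*,\alpha^*)$. Computability of $f$ together with $|\alpha^*| = h$ yields $C(z^* \mid x^*) \leq h + O(1)$, so the V-V bound gives
\[
 C(z^*) \;<\; h + \log n + \log m^{**} + O(\log\log n + \log\log m^{**}).
\]
Conversely, $f(x^*,\alpha^*)$ is reconstructible from $z^*$ together with the trailing $m - m^{**}$ bits, so $C(z^*) \geq (1-\epsilon)m - (m - m^{**}) - O(\log m) = m^{**} - \epsilon m - O(\log m)$. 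Comparing the two bounds and dividing by $m$ gives
\[
 \epsilon \;\geq\; \frac{m^{**}}{m} \;-\; O\!\left(\frac{h + \log n + \log m}{m}\right).
\]
The hypothesis $m < n$ forces $m^{**}/m \geq \min\bigl(1,\,(1-\sigma)n/(Hm)\bigr) \geq (1-\sigma)/H$, while $m = \omega(\log n + h)$ renders the error term $o(1)$, yielding $\epsilon \geq (1-\sigma)/H - o(1)$.

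The delicate point, and the main obstacle in spirit, is the calibration of $m^{**}$: it must be small enough for Theorem~\ref{t:vervyu} to certify $C(x^*) \geq \sigma n$, yet large enough that the truncation step recovers the sharp coefficient $(1-\sigma)/H$. The assumption $m < n$ is exactly what upgrades the intermediate $(1-\sigma)n/(Hm)$ back to the target $(1-\sigma)/H$; without it one would only obtain $(1-\sigma)n/(Hm) - o(1)$. The assumption $m = \omega(\log n + h)$ is then just what is needed to absorb the overhead from V-V and from the self-delimited description of the truncated tail into the $o(1)$ error.
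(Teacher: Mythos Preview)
Your argument is essentially identical to the paper's: set $m^{**}=\min\bigl(m,\lfloor(1-\sigma)n/H\rfloor\bigr)$ (the paper calls it $m'$), invoke Vereshchagin--Vyugin at that output length to produce a hard source, truncate the extractor output to $m^{**}$ bits, and compare the extractor lower bound against the V--V upper bound. The one small slip is that you apply Theorem~\ref{t:vervyu} with advice threshold $h$ but then only obtain $C(z^*\mid x^*)\le h+O(1)$, which does not literally meet that threshold; the paper patches this by invoking V--V with advice parameter $h+c$ for the appropriate constant $c$, and you should do the same.
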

\smallskip

\begin{proof} Let $m' = \min (\lfloor \frac{1-\sigma}{H} \cdot n\rfloor, m)$. Note that $m' \geq  \lfloor \frac{1-\sigma}{H} \cdot m\rfloor$.
Let $x$ be the string guaranteed by the Vereshchagin-Vyugin Theorem~\ref{t:vervyu} for the parameters $n, h + c, m'$, where $c$ is a constant that will be specified later. Note that $C(x) > n - H\cdot m' \geq \sigma \cdot n$. By assumption there is a string $\alpha_x$ such that $C(f(x, \alpha_x)) \geq (1-\epsilon)m$. Let $z$ be the prefix of length $m'$ of $f(x,\alpha_x)$. Note that $C(f(x, \alpha_x)) \leq C(z) + (m-m') + 2\log m + O(1)$, which implies that
$C(z) \geq (1-\epsilon)m - m + m' - 2 \log m - O(1) \geq \frac{(1-\sigma)m}{H} - \epsilon m - 2 \log m - O(1)$.

We also have $C(z \mid x) \leq |\alpha_x| + c = h + c$, for some constant $c$. It follows from Theorem~\ref{t:vervyu} that
$C(z) < h + \log n + \log m' + O(\log \log n, \log \log m')$. So,
$\frac{(1-\sigma)}{H}m - \epsilon m - 2 \log m - O(1) \leq h + \log n + O(\log \log n, \log \log m')$, which implies that $\epsilon \geq \frac{1-\sigma}{H} - \frac{h + O(\log n)}{m} = \frac{1-\sigma}{H} - o(1)$.
\qed \end{proof}
\smallskip

We move to showing the positive results in Fact~\ref{t:boundsinformal} and Fact~\ref{t:boundspolysizeinformal} regarding randomness extraction with small advice that complement the negative result in Theorem~\ref{t:limits}. The constructions use the parameters $n, n_1, m, k, \delta$ and $d$. We denote $N=2^n, N_1 = 2^{n_1}, M=2^m, \Delta = 2^{\delta}$ and $D= 2^{d}$. We identify in the natural way a function $E: \zo^n \times \zo^{n_1} \mapping \zo^m$ with an $[N] \times [N_1]$ table colored with colors from $[M]$. For $A \subseteq [M]$, we say that an $(u,v)$ cell of the table is an $A$-cell if $E(u,v) \in A$. The reader might find helpful to consult the proof plan presented in the Introduction. As explained there the notion of a \emph{balanced table} plays an important role.
\begin{definition}
A table $E: [N] \times [N_1] \mapping [M]$ is $(K,D,\Delta)$-balanced if for any $B \subseteq [N]$ with $|B| \geq K$, for any $A \subseteq [M]$ with $\frac{|A|}{M} \geq \frac{1}{D}$, it holds that
\[
\frac{|\mbox{$A$-cells in $B \times [N_1]$}|}{|B| \times N_1} \leq \Delta \cdot \frac{|A|}{M}.
\]
\end{definition}
The following lemma shows that a balanced table is a good Kolmogorov extractor.
 \begin{lemma}
 \label{l:tableext}
  Let $E: [N] \times [N_1] \mapping [M]$
be a $(K,D,\Delta)$-balanced table and $d = \delta + O(1)$. Suppose $C(E \mid n) = O(1)$ and $n_1, k, d$, and $\delta$ are computable from $n$. Let $(x,y) \in [N]\times [N_1]$ be such that $C(x \mid n) \geq k + O(1)$ and $C(y \mid x ) \geq n_1$. Let $z = E(x,y)$. Then $C(z \mid m) > m-d$. 

(Note: $O(1)$ means that there exist constants, depending only on the universal machine, for which the statements hold.)
  \end{lemma}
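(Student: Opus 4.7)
The plan is to prove this by contradiction, following the template already used for Claim~\ref{c:extract}: I would assume $C(z\mid m)\le m-d$, exhibit a ``small color set'' $A\subseteq[M]$ that contains $z$ and has $|A|/M\ge 1/D$, apply the balanced-table condition to $A$ to control the number of $A$-cells in row $x$, and finally contradict $C(y\mid x)\ge n_1$ by describing $y$ from $x$ as its rank in this short list of $A$-cells.

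Constructing $A$ is the first nontrivial step. I would start from $A_0=\{w\in[M]:C(w\mid m)\le m-d\}$, observe that $|A_0|\le 2^{m-d+1}\le M/D$ (taking the constant hidden in $d=\delta+O(1)$ to be at least $1$), and then pad $A_0$ canonically, appending the lexicographically smallest strings outside $A_0$ until the set $A$ has size exactly $\lceil M/D\rceil$. Because $m$, $d$, $\delta$ are computable from $n$, the members of $A$ can be enumerated given $n$ alone (dovetail over programs of length $\le m-d$ to produce $A_0$, then append the canonical padding), and by construction $|A|/M\ge 1/D$ and $z\in A_0\subseteq A$.

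Next, call a row $u\in[N]$ \emph{bad} if more than a $\Delta\cdot|A|/M$ fraction of the cells in $\{u\}\times[N_1]$ are $A$-cells. The balanced-table inequality, applied in contrapositive form to the set of bad rows, forces that set to have size strictly less than $K$. Since $E$, $n_1$, $k$, $\delta$, $d$ all have $O(1)$ complexity given $n$, the bad rows can be enumerated from $n$, so every bad row $u$ satisfies $C(u\mid n)\le k+O(1)$; the hypothesis $C(x\mid n)\ge k+O(1)$ (with the constant absorbed into it chosen large enough) then rules $x$ out. Consequently the list $\{v\in[N_1]:E(x,v)\in A\}$ has size at most $\Delta\cdot|A|/M\cdot N_1\le 2^{\delta-d}\,N_1\,(1+o(1))$.

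Finally, because $E(x,y)=z\in A$, the string $y$ lies in the list just bounded, and the list can be enumerated from $x$ alone (from which $n$, hence all parameters, $E$, and the enumeration of $A$ are effectively available), so $y$ is determined by its rank: $C(y\mid x)\le n_1-(d-\delta)+O(1)$. The main obstacle is the genuine tension that $A$ must be large enough to trigger the balanced-table premise ($|A|/M\ge 1/D$) while $\Delta\cdot|A|/M$ must strictly beat $1$, which is exactly what forces $d$ to exceed $\delta$ by an absolute constant. I would resolve it by choosing the $O(1)$ hidden in $d=\delta+O(1)$ strictly larger than the additive constant absorbed into the final bound, thereby securing $C(y\mid x)<n_1$ and the desired contradiction.
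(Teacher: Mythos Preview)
Your proposal is correct and follows essentially the same argument as the paper: assume $C(z\mid m)\le m-d$, build a small enumerable color set $A$ with $|A|/M\ge 1/D$, use the balanced property to bound the number of bad rows by $K$, infer that $x$ is good, bound the $A$-cells in row $x$ by $2^{\delta-d+n_1+O(1)}$, and contradict $C(y\mid x)\ge n_1$ by describing $y$ via its rank.

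The only difference is cosmetic: the paper obtains $|A|\ge M/D$ by enlarging the complexity threshold to $m-d+O(1)$, whereas you pad $A_0$ with lexicographically small strings. Both devices give an $A$ that is enumerable from $n$ with $M/D\le |A|\le O(M/D)$, which is all the argument needs. One small arithmetic slip: you write $|A_0|\le 2^{m-d+1}\le M/D$, but $M/D=2^{m-d}$, so in fact $|A_0|\le 2M/D$; this is harmless---just pad to $\max(|A_0|,\lceil M/D\rceil)$ and absorb the extra factor of $2$ into the final $O(1)$.
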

 \begin{proof} The proof is similar to the proof of Claim~\ref{c:extract}. We sketch the argument. Suppose $C(z \mid m) \leq m-d$. 
 
 Let $A = \{w \in \zo^m \mid C(w \mid m) \leq m-d + O(1)\}$, where the constant $O(1)$ is chosen so that $|A| \geq 2^{m-d}$. Also note that $|A| \leq 2^{m-d+O(1)}$. 
 
 We say that a row $v$ is \emph{bad} if the number $A$-cells in the $\{v \} \times [N_1]$ rectangle of $E$ is $> \Delta \cdot \frac{|A|}{M} \cdot N_1$.
 The number of bad rows is at most $K$, because the table $E$ is $(K,D,\Delta)$-balanced.
 Therefore a bad row $v$ is described by the information needed to enumerate the bad rows (and this information is derivable from $n$) and from its rank in the enumeration of bad rows. So, if $v$ is bad, $C(v \mid n) < k+O(1)$. 
 
 Since $C(x \mid n) > k + O(1)$, it follows that $x$ is good. Therefore, the number of $A$-cells in the $\{x\} \times [N_1]$ rectangle of $E$ is $\leq \Delta \cdot \frac{|A|}{M} \cdot N_1 =2^{\delta - d + n_1 + O(1)}$.
 
 Note that, by our assumption, the cell $(x,y)$ is an $A$-cell. Then the string $y$, given $x$, can be described by the rank of $(x,y)$ among the $A$-cells in the $\{ x \} \times [N_1]$ rectangle of $E$.
 
 So, $C(y \mid x) \leq \delta - d + n_1 +O(1)$ and the right hand side is less than $n_1$ for an appropriate choice of the constant $O(1)$ in the relation between $d$ and $\delta$. We obtain that $C(y \mid x) < n_1$, contradiction.
 \qed \end{proof}
 \smallskip
 
 The next lemma establishes the parameters for which balanced tables exist.
 \begin{lemma}
 \label{l:probtable}
 Suppose the parameters satisfy the following relations: $D = O(\Delta), n/\delta = o(N_1)$, and $M = o(\delta\cdot K \cdot N_1)$. Then there exists a table $E:[N] \times [N_1] \mapping [M]$ that is $(K,D,\Delta)$-balanced.
 \end{lemma}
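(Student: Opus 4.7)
The plan is to establish the lemma by the probabilistic method. I would sample $E : [N] \times [N_1] \mapping [M]$ by coloring each cell independently and uniformly from $[M]$, and then argue that with strictly positive probability the resulting $E$ is $(K,D,\Delta)$-balanced.

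As a preliminary reduction, I would observe that it suffices to verify the balance condition only for subsets $B \subseteq [N]$ with $|B| = K$ exactly. Indeed, for any fixed $A$ and any coloring, letting $r_v$ denote the number of $A$-cells in row $v$, the density $\sum_{v \in B} r_v / (|B| N_1)$ over $\{B : |B| \geq K\}$ is maximized by taking the $K$ rows with the largest $r_v$: moving from the top $k$ to the top $k+1$ rows averages in a value at most the running minimum and therefore cannot raise the density. Hence the probabilistic analysis only has to union-bound over $K$-subsets.

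The core is Chernoff plus union bound. For a fixed pair $(B,A)$ with $|B|=K$ and $|A| = j \geq M/D$, the number $X$ of $A$-cells in $B \times [N_1]$ is a sum of $KN_1$ independent $\mathrm{Bernoulli}(j/M)$ variables with mean $\mu = jKN_1/M$. The multiplicative Chernoff bound gives
\[
\prob[X > \Delta \mu] \;\leq\; \left(\frac{e^{\Delta-1}}{\Delta^{\Delta}}\right)^{\mu} \;\leq\; 2^{-c\, \Delta\, \delta\, \mu}
\]
for an absolute constant $c > 0$ (valid once $\Delta$ is bounded away from $1$; smaller $\Delta$ is handled by the weaker exponent $2^{-c\mu}$, which still suffices because then $\delta$ is a constant). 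Union-bounding over the at most $\binom{N}{K} \leq 2^{Kn}$ choices of $B$, the at most $M$ choices of the size $j$, and the at most $\binom{M}{j} \leq 2^{j \log(eM/j)} = 2^{O(j\delta)}$ choices of $A$ (where the last estimate uses $j \geq M/D$ and $D = 2^{O(\delta)}$), the total failure probability is at most
\[
 M \cdot 2^{\,Kn \;+\; O(j\delta) \;-\; c\,\Delta\,\delta\, jKN_1/M},
\]
and the worst case inside the exponent is $j = M/D$.

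It remains to verify that the hypotheses drive this exponent strictly negative. At $j = M/D$ the negative term is at least $c'\, \delta K N_1$ for a positive constant $c'$, since $\Delta/D = \Omega(1)$ by $D = O(\Delta)$. The hypothesis $n/\delta = o(N_1)$ then forces $Kn = o(\delta K N_1)$, and $M = o(\delta K N_1)$ together with $\delta \leq \Delta$ forces $j\delta = o(\Delta\delta \cdot jKN_1/M)$ uniformly in $j \geq M/D$; the prefactor $M$ outside contributes only $m$ to the exponent, which is absorbed. I expect the delicate step to be extracting the full Chernoff tail $\Theta(\Delta \log \Delta)\cdot \mu$ rather than merely $\Theta(\Delta)\cdot \mu$, since it is exactly this extra $\log \Delta = \delta$ factor that makes the hypothesis $M = o(\delta K N_1)$, rather than the much more stringent $M = o(K N_1)$, enough to close the union bound.
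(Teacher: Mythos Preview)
Your proposal is correct and follows essentially the same approach as the paper: color cells of a random table independently, apply the multiplicative Chernoff bound (exploiting the full $\Theta(\Delta\log\Delta)\mu$ exponent, which is exactly what turns the hypothesis $M=o(\delta K N_1)$ into enough), and union-bound over the choices of $B$ and $A$. The only cosmetic difference is that the paper reduces at the outset to $|A|=M/D$ exactly (by the same averaging trick you used for $|B|=K$: take the $M/D$ colors with the highest counts), whereas you union-bound over all sizes $j\ge M/D$; both routes close the bound under the stated hypotheses.
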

 \begin{proof}  The proof is by the probabilistic method and is presented in the Appendix.
 \qed \end{proof}
 \smallskip
 
 We can now prove Fact~\ref{t:boundsinformal}. The formal statement is as follows.
 \begin{theorem}
 \label{t:extractsmalladvice}
 Parameters: Let $m(n)$ and $h(n)$ be computable functions such that $m(n) < n$ for all $n$ and $h(n) = \omega (\log \frac{n}{m(n)})$.

 There exists a computable function $E:\zo^n \times \zo^{h(n)} \mapping \zo^{m(n)}$, such that for every $x \in \zo^n$ with $C(x \mid n) \geq m(n)$, there exists $\alpha_x \in \zo^{h(n)}$ such that
 $C(E(x, \alpha_x) | m) \geq m(n) - o(m(n))$.
 
 \end{theorem}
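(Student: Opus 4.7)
The plan is to obtain $E$ directly from the balanced-table machinery developed earlier in this section: Lemma~\ref{l:probtable} guarantees existence, and Lemma~\ref{l:tableext} guarantees the extraction property. Set $N=2^n$, $N_1=2^{h(n)}$, $M=2^{m(n)}$, and $K=2^{m(n)-O(1)}$ (so that ``$C(x\mid n)\ge k+O(1)$'' in Lemma~\ref{l:tableext} is implied by the hypothesis $C(x\mid n)\ge m(n)$). The only remaining freedom is the balance slack, which I parametrize through $\delta(n)$ (with $\Delta=2^{\delta}$ and $d=\delta+O(1)$): the output complexity guarantee will then read $C(E(x,\alpha_x)\mid m)>m-d=m-\delta-O(1)$, so I need $\delta(n)=o(m(n))$.

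Next I check the three hypotheses of Lemma~\ref{l:probtable}. Condition $D=O(\Delta)$ is immediate by taking $d=\delta+O(1)$. With the above choices, $M=o(\delta\cdot K\cdot N_1)$ reduces to $\delta\cdot 2^{h(n)}\to\infty$, which is automatic since $h(n)\to\infty$ (a consequence of $h(n)=\omega(\log(n/m(n)))$). The binding condition is $n/\delta=o(N_1)$, i.e.\ $h(n)+\log\delta(n)-\log n\to\infty$. Writing $\phi(n):=h(n)-\log(n/m(n))$, which tends to infinity by hypothesis, I choose $\delta(n)=\max(1,\,\lceil m(n)/2^{\phi(n)/2}\rceil)$; this is computable from $n$ because $h$ and $m$ are, satisfies $\delta/m\to 0$, and yields $h(n)+\log\delta(n)-\log n\ge \phi(n)/2-O(1)\to\infty$. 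Thus Lemma~\ref{l:probtable} delivers a $(K,D,\Delta)$-balanced table; since this property is decidable by finite inspection, I let $E_n$ be the lexicographically first such table, making the ensemble $E$ computable with $C(E\mid n)=O(1)$.

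Finally I apply Lemma~\ref{l:tableext}. Fix $x\in\zo^n$ with $C(x\mid n)\ge m(n)$. Among the $2^{h(n)}$ strings of length $h(n)$, a standard counting argument produces some $\alpha_x$ with $C(\alpha_x\mid x)\ge h(n)=n_1$ (fewer than $2^{h(n)}$ strings fail this). Lemma~\ref{l:tableext} then gives $C(E(x,\alpha_x)\mid m)>m-d=m-\delta-O(1)=m-o(m)$, which is the desired conclusion.

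The main obstacle, and the only real content beyond invoking the two lemmas, is the calibration of $\delta(n)$: it must be small enough to make the output complexity $m-o(m)$ yet large enough to satisfy the probabilistic condition $n/\delta=o(N_1)$, and the only leverage available is the gap $\phi(n)=h(n)-\log(n/m(n))\to\infty$ supplied by the hypothesis. The choice $\delta=m/2^{\phi/2}$ splits this gap evenly and works uniformly regardless of the relative growth of $m(n)$, $n$, and $h(n)$.
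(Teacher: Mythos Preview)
Your proposal is correct and follows essentially the same approach as the paper: invoke Lemma~\ref{l:probtable} to get a balanced table, take the first one found by brute-force search so that $C(E\mid n)=O(1)$, choose $\alpha_x$ with $C(\alpha_x\mid x)\ge h(n)$ by counting, and conclude via Lemma~\ref{l:tableext}. The only difference is cosmetic: the paper sets $\delta=n/2^{0.5\,h(n)}$ directly (which one checks is $o(m)$ from $h=\omega(\log(n/m))$ and satisfies $n/\delta=2^{0.5\,h(n)}=o(N_1)$), whereas you parametrize through $\phi(n)=h(n)-\log(n/m(n))$ and take $\delta\approx m/2^{\phi/2}=\sqrt{mn}/2^{0.5\,h(n)}$; both choices work for the same reasons.
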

 \begin{proof} We take $\delta = \frac{n}{2^{0.5 h(n)}}$, $d = \delta + c$, where $c$ is the constant from Lemma~\ref{l:tableext}, $n_1 = h(n)$.
 
 By Lemma~\ref{l:probtable}, there exists a table $E: [N] \times [N_1] \mapping [M]$ that is
 $(K,D,\Delta)$-balanced, and by brute force one can build such a table from $n$. Thus we obtain such a table $E$ with $C(E \mid n) = O(1)$. We take $\alpha_x$ to be a string in $\zo^{h(n)}$ such that $C(\alpha_x \mid x) \geq h(n)$. Using Lemma~\ref{l:tableext}, we obtain that $C(E(x, \alpha_x) \mid m) \geq m(n) - d = m(n) -  \frac{n}{2^{0.5 h(n)}} - c =  m- o(m)$.
 \qed \end{proof}
 \smallskip
 
 Our next goal is to derandomize the construction in Theorem~\ref{t:extractsmalladvice}. As explained in the Introduction the key observation is that checking if a table is balanced can be done, in an approximate sense, by constant-depth circuits with relatively small size.
 \smallskip
 
 \begin{lemma}
 \label{l:countcircuit}
 The parameters $n_1,m,k, d$, and $\delta$ are positive integers computable from $n$  in polynomial time.
 We assume $k \leq n, m \leq k, d \leq n$.
 
 There exists a circuit $G$ of size $\poly(N^K)$ and constant depth such that for any table $E: [N] \times [N_1] \mapping [M]$,
 
 (a) if $G(E) = 1$, then $E$ is $(K,D,1.03 \Delta)$-balanced,
 
 (b) if $E$ is $(K,D, \Delta)$-balanced, then $G(E) = 1$.
 
 \end{lemma}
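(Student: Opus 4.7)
The plan is to construct $G$ as a constant-depth AND over all subsets $B \subseteq [N]$ of size exactly $K$ (of which there are at most $\binom{N}{K} \leq N^K$), each branch verifying via Ajtai's approximate-counting circuits (Theorem~\ref{t:ajtai}) a concentration bound on the color distribution of the sub-table $E|_{B \times [N_1]}$. First I would observe that it suffices to check the balanced condition only for $|B| = K$ and $|A| = M/D$: if the property is violated for some $|B_0| \geq K$, $|A_0| \geq M/D$, then restricting to the $K$ rows of $B_0$ with the most $A_0$-cells and to the $M/D$ colors of $A_0$ appearing most frequently in those rows yields, by top-fraction averaging, a rectangle with more than $\Delta K N_1/D$ $A$-cells, so $(|B|=K,\,|A|=M/D)$ already witnesses a violation. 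Thus $G$ only needs to check, for every $B$ of size $K$, that $\sum_{w \in \mathrm{top}_{M/D}} n_{w,B} \leq \Delta K N_1 / D$, where $n_{w,B} := |\{(v,y) \in B \times [N_1] : E(v,y) = w\}|$ and $\mathrm{top}_{M/D}$ denotes the $M/D$ colors with largest counts.

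To avoid iterating over the $\binom{M}{M/D}$ choices of $A$, I would use the Lagrangian identity
\[
\sum_{w \in \mathrm{top}_{M/D}} n_{w,B} \;=\; \min_{T \geq 0}\Bigl(T \cdot (M/D) \,+\, \sum_{w \in [M]} (n_{w,B} - T)^+\Bigr),
\]
which follows from piecewise linearity in $T$ and is attained at the $(M/D)$-th largest count. Restricting $T$ to a geometric grid $T_j = \lfloor (1+\eta)^j \rfloor$, $j = 0, 1, \ldots, O(\log(KN_1)/\eta)$, for a small constant $\eta$ (e.g.\ $\eta = 0.01$) loses only a $(1+\eta)$ factor in the minimum. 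For each pair $(B, T_j)$ I compute, by nested Ajtai circuits, the quantities
\[
H \,:=\, \sum_{(v,y) \in B \times [N_1]} \mathbf{1}\bigl[n_{E(v,y),B} \geq T_j\bigr] \;=\; \sum_{w \,:\, n_{w,B} \geq T_j} n_{w,B}
\]
and $L := |\{w \in [M] : n_{w,B} \geq T_j\}|$. An inner Ajtai call over the $KN_1$ cells of $B \times [N_1]$ decides the heaviness of a given color, and an outer Ajtai call tallies the outcomes (over cells for $H$, over colors for $L$). Since $\sum_w (n_{w,B} - T_j)^+ = H - T_j L$, the objective $T_j (M/D) + H - T_j L$ is then compared to $(1+\eta)^2 \Delta K N_1/D$ using constant-depth addition, multiplication by constants, and comparison of $\poly(n)$-bit integers, all in $\acz$. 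Each per-$B$ branch outputs the OR over $j$, and $G$ is the AND over $B$. The total gate count is $\binom{N}{K} \cdot O(\log(KN_1)/\eta) \cdot \poly(KN_1 M) \leq \poly(N^K)$, in constant depth.

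The main obstacle, and the step I expect to be most delicate, is the error accounting. Each Ajtai call is only multiplicatively $(1 \pm \epsilon)$-accurate, and these errors compound across the inner heaviness test, the outer tally, the geometric discretization of the optimal $T^*$, and the final threshold comparison. Fixing $\eta$ and the Ajtai precision $1/\epsilon$ to be sufficiently small absolute constants will make the total multiplicative loss less than $1.03$, yielding completeness (every $(K,D,\Delta)$-balanced $E$ passes, giving (b)) and soundness (every $E$ passing $G$ is $(K,D,1.03\Delta)$-balanced, giving (a)).
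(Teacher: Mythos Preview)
Your proposal takes a genuinely different route from the paper, and the detour creates a real difficulty.

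The paper's argument is direct: it iterates over \emph{all} pairs $(B,A)$ with $|B|=K$ and $|A|=M/D$, and for each pair runs a single Ajtai threshold test on the bit-vector $x_{B,A}$ indicating which cells of $B\times[N_1]$ are $A$-colored. No arithmetic, no nesting. You explicitly avoid enumerating the $\binom{M}{M/D}$ color sets, presumably fearing it blows past the $\poly(N^K)$ budget, but under the stated hypotheses it does not: since $m\le k$ we have $\binom{M}{M/D}\le 2^{M}=2^{2^m}\le 2^{2^k}=2^{K}\le N^{K}$, so the full enumeration over $A$ is already within budget. This is the observation that makes the paper's proof a one-liner and renders the Lagrangian machinery unnecessary.

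More importantly, your error accounting has a gap you have not closed. Ajtai's circuit returns only a threshold bit, not a count, so to form $H-T_jL$ you must first extract approximate numerical values of $H$ and $L$ and then subtract. Even granting $(1\pm\epsilon)$-accurate estimates $\hat H,\hat L$, the additive error in $\hat H - T_j\hat L$ is of order $\epsilon(H+T_jL)\le 2\epsilon H$, and the inner heaviness test contributes a further term of order $\epsilon T_j M$ from colors with $n_{w,B}$ in the Ajtai gap $((1-\epsilon)T_j,(1+\epsilon)T_j)$. You then compare against a threshold of size $\Delta K N_1/D$. The lemma places no constraint relating $D$ and $\Delta$ (the relation $D=O(\Delta)$ appears only later, in the application), so for general parameters these error terms can swamp the threshold; ``sufficiently small absolute constants'' for $\epsilon,\eta$ will not suffice. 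The paper's single Ajtai call per $(B,A)$ compares one count to one threshold with multiplicative slack $1\pm 0.01$, which immediately gives the $1.03$ factor with no subtraction and no parameter dependence.
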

 \smallskip
 
 \begin{proof} Let $a = (1/0.99) \Delta \cdot 1/D \cdot K \cdot N_1$. Let us fix for the moment  a set
 of rows $B \subseteq [N]$ of size $|B| = K$ and a set of colors $A \subseteq [M]$ of size $|A| = M/D$. Let $x_{B,A}$ be a binary string indicating which cells in the $B \times [N_1]$ rectangle of $E$ are $A$-colored. Formally, $x_{B,A}$ is the string of length $K \cdot N_1$, whose $\langle i, j \rangle$-th bit is $1$ if the cell $(i,j)$ in the rectangle $B \times [N_1]$ of $E$ is an $A$-cell and $0$ if it is not.

 By Ajtai's Theorem~\ref{t:ajtai}, there exists a polynomial-size constant-depth circuit $G'$ (which does not depend on $B$ and $A$) with $a$ hardwired and such that 
 \begin{itemize}
	\item  $G'(x_{B,A}) = 1$ if the number of $A$-cells in $B \times [N_1]$ is at most $(1-0.01) \cdot a$, and
	\item $G'(x_{B,A}) = 0$ if the number of $A$-cells in $B \times [N_1]$ is at least $(1+0.01) \cdot a$.
\end{itemize}
Now we describe the circuit $G$.

 The circuit $G$ on input an encoding of the table $E$ (having length $N \cdot N_1 \cdot m$) computes in constant depth a string $x_{B,A}$  for every $B \subseteq [N]$ with $|B| = K$ and for every $A \subseteq [M]$ with $A = M/D$. There are ${N \choose K} {M \choose M/D} = \poly(N^K)$ such strings $X_{B,A}$. Each such string $x_{B,A}$ is the input of a copy of $G'$. The output gates of all the copies of $G'$ are connected to an AND gate, which is the output gate.
 
 If $G(E) = 1$, then $G'(x_{B,A}) = 1$ for all $B$'s and $A$'s as above. This implies that for all $B \subseteq [N]$ with $|B| \geq K$ and all $A \subseteq [M]$ of size $\geq M/D$, the number of $A$-cells in the $B \times [N_1]$ rectangle of $E$ is at most $(1+0.01)a \leq (1.03) \cdot \Delta \cdot (1/D) \cdot K \cdot N_1$, \ie, $E$ is $(K, D, 1.03\Delta)$-balanced.
 
 In the other direction, if $E$ is $(K,D, \Delta)$-balanced then for all $B \subseteq [N]$ with $|B| = K$ and for all $A \subseteq [M]$ with $A = M/D$, the number of $A$-cells in $B \times [N_1]$ is at most $\Delta \cdot (1/D) \cdot K \cdot N_1 = (1-0.01)a$, which implies that $G(E) =1$.
\qed \end{proof}
 \smallskip
 
 We next prove Fact~\ref{t:boundspolysizeinformal}. The formal statement is as follows.
\smallskip

 \begin{theorem}
 \label{t:derandextract}
 Parameters: Let $m(n)$ and $h(n)$ be polynomial-computable functions such that $m(n) \leq 0.99 \cdot \alpha_{NW} \cdot  n$ and
 $h(n) = \omega (\log(\frac{n}{m(n)}))$. 
 
 There exists a function $E: \zo^n \times \zo^{h(n)} \mapping \zo^{m(n)}$,  computable by an  effectively constructible circuit having polynomial size and the following property: For every $x \in \zo^n$ with $C(x \mid n) \geq m(n) + O(1)$, there exists a string $\alpha_x \in \zo^{h(n)}$ such that $C( E(x, \alpha_x) \mid m) \geq m - o(m)$.
 \end{theorem}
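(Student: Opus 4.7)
The plan is to follow the architecture of Theorem~\ref{t:extractsmalladvice}, but instead of locating a $(K,D,\Delta)$-balanced table by exhaustive search (which would only give a computable function, not a polynomial-size circuit), to locate one inside the range of the Nisan--Wigderson generator of Theorem~\ref{t:NWgen}. The resulting $E$ is then the fixed table $E^\star=\mbox{NW-gen}(s^\star)$ for some short seed $s^\star$ that is hard-wired into the circuit, and the circuit uses the bit-by-bit computability clause of Theorem~\ref{t:NWgen} to read the $m(n)$ output bits one at a time.

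First I would keep the same parameter choices as in Theorem~\ref{t:extractsmalladvice}: $\delta=n/2^{0.5\,h(n)}$, $d=\delta+c$ with $c$ the constant of Lemma~\ref{l:tableext}, $n_1=h(n)$, $k=m(n)+O(1)$, and the corresponding $N,N_1,M,K,D,\Delta$. The probabilistic argument behind Lemma~\ref{l:probtable} actually shows that a uniformly random table $[N]\times[N_1]\to[M]$, viewed as a binary string of length $\tilde n := N\cdot N_1\cdot m$, is $(K,D,\Delta)$-balanced with probability at least $1/2$. Next I would feed this string into the constant-depth, $\poly(N^K)$-size circuit $G$ supplied by Lemma~\ref{l:countcircuit}; it accepts every $(K,D,\Delta)$-balanced table and rejects every table that is not $(K,D,1.03\Delta)$-balanced, so $G$ accepts a uniformly random input with probability at least $1/2$. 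Let $d_0$ be the absolute-constant depth of $G$ and let $\alpha_{NW}$ be the constant provided by Theorem~\ref{t:NWgen} for depth $d_0$. The critical quantitative step is to verify $|G|\le 2^{\tilde n^{\alpha_{NW}}}$: taking logarithms, this reduces to $O(nK)=O(n\cdot 2^{m(n)})\le \tilde n^{\alpha_{NW}}\ge 2^{\alpha_{NW} n-O(1)}$, i.e.\ $m(n)+\log n\le \alpha_{NW}\,n$, which is exactly what $m(n)\le 0.99\,\alpha_{NW}\,n$ delivers for all sufficiently large $n$.

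With $G$ fooled by NW-gen at output length $\tilde n$, I conclude that there is a seed $s^\star$ of length $\poly(\log \tilde n)=\poly(n)$ such that $G(\mbox{NW-gen}(s^\star))=1$, so the table $E^\star$ it encodes is $(K,D,1.03\Delta)$-balanced; such an $s^\star$ can be located algorithmically from $n$ by trying all seeds and simulating $G$, which yields effective constructibility. Applying Lemma~\ref{l:tableext} to $E^\star$ with $\Delta$ replaced by $1.03\Delta$ (only the additive $O(1)$ in $d$ changes), for every $x$ with $C(x\mid n)\ge m(n)+O(1)$ and every $\alpha_x\in\zo^{h(n)}$ with $C(\alpha_x\mid x)\ge h(n)$, we obtain $C(E^\star(x,\alpha_x)\mid m)>m-d = m(n)-o(m(n))$, and such an $\alpha_x$ always exists. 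Finally, by the bit-by-bit computability clause of Theorem~\ref{t:NWgen}, any one bit of $E^\star$ can be produced from $s^\star$ in $\poly(\log\tilde n)=\poly(n)$ time, so with $s^\star$ hard-wired the map $(x,\alpha)\mapsto E^\star(x,\alpha)$ is realized by a polynomial-size effectively constructible circuit.

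The hard part is the parameter-budget juggling in the second paragraph: the extractor is only useful if $K=2^{m(n)+O(1)}$ is allowed to grow like $2^{0.99\,\alpha_{NW}\,n}$, yet this forces $|G|=2^{O(nK)}$ to be astronomically large, and one must still squeeze it below $2^{\tilde n^{\alpha_{NW}}}$ so that NW-gen fools it. Once this balancing is done, the rest of the argument is a direct chaining of Lemmas~\ref{l:probtable},~\ref{l:countcircuit},~\ref{l:tableext} with Theorem~\ref{t:NWgen}, and the proof differs from Theorem~\ref{t:extractsmalladvice} only in that the random table is replaced by a pseudo-randomly generated one, at a cost of a constant factor in the balance parameter.
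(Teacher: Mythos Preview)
Your proposal is correct and follows essentially the same route as the paper's own proof: identical parameter choices, the same chaining of Lemmas~\ref{l:probtable}, \ref{l:countcircuit}, \ref{l:tableext} with Theorem~\ref{t:NWgen}, and the same use of the bit-by-bit computability clause to get a polynomial-size circuit with the good seed hard-wired. If anything, you are more explicit than the paper about the quantitative step $\poly(N^K)\le 2^{\tilde n^{\,\alpha_{NW}}}$, which the paper simply asserts; your derivation $m(n)+\log n\le \alpha_{NW}\,n$ from $m(n)\le 0.99\,\alpha_{NW}\,n$ is exactly the intended justification.
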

 \smallskip

 \begin{proof} Let $k = m(n), \delta = \frac{n}{2^{0.5 h(n)}}, d = \delta + c + \log 1.03$  (where $c$ is the constant from Lemma~\ref{l:tableext}), and  $n_1 = h(n)$.
 
  Let $G$ be the circuit promised by Lemma~\ref{l:countcircuit} for these parameters.  Let $d_{Ajtai}$ be the depth of the circuit $G$ and let $\alpha_{NW}$ be the constant corresponding to $d_{Ajtai}$ in Theorem~\ref{t:NWgen}.
  
  Let $\tilde{N} = N \cdot N_1 \cdot m$. This is the size of an encoding of a table $E: [N] \times [N_1] \mapping [M]$. Let $\mbox{NW-gen} : \zo^{\log^{2d _{Ajtai+6}}(\tilde{N})} \mapping \zo^{\tilde{N}}$ be the Nisan-Wigderson pseudo-random generator given by Theorem~\ref{t:NWgen} for the depth parameter equal to $d_{Ajtai}$.
  Note that $\poly(N^K) \leq 2^{\tilde{N}^{\alpha_{NW}}}$, where $\poly(N^K)$ is the bound from Lemma~\ref{l:countcircuit} for the size of the circuit $G$.

 The probabilistic argument in Lemma~\ref{l:probtable} can be modified to show that among the 
 tables of type $E: [N] \times [N_1] \mapping [M]$ the fraction of those which are $(K,D, \Delta)$-balanced is at least $0.51$.  Since $G$ accepts all such tables,
 \[
  \prob_{E \in \zo^{\tilde{N}}}[ G(E) = 1] \geq 0.51.
  \]
  Since the circuit $G$ has depth equal to $d_{Ajtai}$ and size bounded by $2^{\tilde{N}^{\alpha_{NW}}}$, it follows that if we replace a random $E \in \zo^{\tilde{N}}$ by $\mbox{NW-gen}(s)$ for a random seed $s \in \zo^{\log^{2d_{Ajtai}+6}(\tilde{N})}$, we obtain
 \[
\prob_{s \in \zo^{\log^{2d_{Ajtai}+6}(\tilde{N})}}[ G(\mbox{NW-gen}(s)) = 1] \geq 0.5.
 \]
 We only need the fact that there exists a string $s \in \zo^{\log^{2d_{Ajtai}+6}(\tilde{N})}$
 such that $\mbox{NW-gen}(s)$ is a table $E: [N] \times [N_1] \mapping [M]$ that is $(K,D, 1.03 \Delta)$-balanced. We fix such an $s$ that is computable from $n$ (say, the smallest $s$ that has the property) and the corresponding table $E$ produced by the Nisan-Wigderson pseudo-random generator on seed $s$.
 
Let us consider $x \in \zo^n$ with $C(x \mid n) \geq k$ and $\alpha_x \in \zo^{n_1}$ with $C(\alpha_x \mid y)\geq n_1$. Since $E$ is $(K, D, 1.03\Delta)$-balanced, it follows from Lemma~\ref{l:tableext} that $C(E(x,\alpha_x)\mid m) \geq m - d = m - o(m)$.

Now, let us view $E$ (which is $\mbox{NW-gen}(s)$) as a function ${\rm E}: \zo^n \times \zo^{n_1} \mapping \zo^m$. From the properties (\ie, the ``Moreover ..." in Theorem~\ref{t:NWgen}) of the Nisan-Wigderson pseudo-random generator, it follows that this function can be computed by a polynomial-size circuit which has $s$ hardwired. Since $s$ is also computable from $n$,  one can compute a description of the circuit, \ie, the circuit is effectively constructible. 
\qed \end{proof}

 \bibliography{c:/book-text/theory}

\bibliographystyle{alpha}

\newpage

\appendix
 
 \section{Appendix}
\medskip

 {\bf Proof of Claim~\ref{c:balancedt}.} 
 \smallskip
 
 We use the probabilistic method. It is enough to show the assertion for all $B_1$, $B_2$ and $A$ having sizes exactly $K_x$, $K_y$, and respectively $\frac{M}{D}$. Let us consider a random function $E: \zo^n \times \zo^n \mapping \zo^m$. Fix $B_1, B_2$ and $A$, satisfying the above requirement on their sizes. By the Chernoff's bound,
 \[
 \begin{array}{ll}
 \prob[|\mbox{$A$-cells in $B_1 \times B_2$}| \geq 2 \cdot \frac{|A|}{M} \cdot |B_1| \cdot |B_2| ] & \\
 \quad\quad\quad\quad \leq e^{-(1/3) (|A|/M)|B_1||B_2|} = e^{-(1/3)(1/D)K_x K_y}. &
 \end{array}
 \]
 The sets $B_1$, $B_2$, and $A$ can be chosen in ${N \choose K_x} \cdot {N \choose K_y} \cdot {M \choose M/D} \leq N^{K_x} \cdot N^{K_y} \cdot (eD)^{M/D} = e^{K_x \ln N + K_y \ln N + (M/D) + (M/D) \ln D}$ ways.
 Since $t_x \geq 13 \log n + I(x : y) + O(1)$  and $d < 6 \log n + I(x : y) + O(1)$, it follows that $t_x \geq d + 7 \log n +O(1)$ and from here $k_x \geq d + \log n + O(1)$. Since $t_y \geq 7 \log n + I(x : y) + O(1)$, it follows that $k_y \geq d + \log n  + O(1)$. It can be easily checked that, given these bounds for $k_x$ and $k_y$ and for an appropriate choice of the constant in the definition of $d$,
 \[
 e^{-(1/3)(1/D)K_x K_y} \cdot e^{K_x \ln N + K_y \ln N + (M/D) + (M/D) \ln D} < 1.
 \]
 Thus, the probability that a random $E$ satisfies the requirements is less than $1$, which implies that there exists an $E$ satisfying the claim.~\qed
 \medskip
 
 {\bf Proof of Lemma~\ref{l:probtable}   .}
 \smallskip
 
  The proof is by the probabilistic method. Consider a random function $E:[N]\times [N_1] \mapping [M]$.  We evaluate the probability that $E$ fails to be $(K, D, \Delta)$-balanced. Note that if $E$ fails to be $(K,D,\Delta)$-balanced, then there exists a set $B \subseteq [N]$ of size exactly $K$ and a set $A \subseteq [M]$ of size exactly $M/D$ such that the fraction of $A$ cells in the $B \times [N_1]$ rectangle of $E$ is greater  than $\Delta \cdot |A|/M$. Let us call this latter event ${\cal S}$. We show that the probability of  ${\cal S}$ is less than $1$.  Fix 
 $B \subseteq [N]$ of size $K$ and $A \subseteq [M]$ of size $M/D$. For a fixed $(x,y) \in B \times [N_1]$, $\prob[E(x,y) \in A]=|A|/M$. The expected number of $A$-cells in $B \times [N_1]$ is $\mu = |B| \cdot N_1 \cdot |A|/M$. Let $\Delta'=\Delta -1$. 
 
 We use the following version of the Chernoff bound. If $X$ is a sum of independent Bernoulli random variables, and the expected value $E[X]= \mu$, then
$\prob[X \geq (1+\Delta)\mu] \leq e^{-\Delta (\ln (\Delta/3)) \mu}$.\footnote{The standard Chernoff inequality $\prob(X \geq (1+\Delta) \mu] \leq \big( \frac{e^\Delta}{(1+\Delta)^{(1+\Delta)}}\big)^\mu$ is presented in many textbooks. It can be checked easily that $\frac{e^\Delta}{(1+\Delta)^{(1+\Delta)}} < e^{-\Delta \ln (\Delta/3)}$.}
 
 Using these Chernoff bounds, 
 \[
 \prob [|\mbox{$A$-cells in $B \times [N_1]|$} > (1+\Delta')\mu]\leq e^{-\Delta' (\ln (\Delta'/3)) \mu}.
 \]
 The set $B$ can be chosen in ${N \choose K} \leq N^K$ ways. The set $A$ can be chosen in ${M \choose M/D} \leq (eD)^{M/D}$ ways.  It follows that the probability of ${\cal S}$ is bounded by
 \[
 N^K \cdot (eD)^{M/D} \cdot e^{-\Delta' (\ln (\Delta'/3)) \cdot K \cdot N_1 \cdot (1/D)},
 \]
 which, taking into account the relations between parameters, is less than $1$.~\qed

\end{document}